\documentclass{article}

\usepackage[english]{babel}
\usepackage{bbold}
\usepackage{graphicx}
\usepackage{caption}
\usepackage{subcaption}
\usepackage{array}
\usepackage{booktabs}
\usepackage{array, makecell}
\usepackage{multirow}
\usepackage{cite}
\usepackage[letterpaper,top=2cm,bottom=2cm,left=3cm,right=3cm,marginparwidth=1.75cm]{geometry}
\usepackage{amsfonts}
\usepackage{dsfont}
\usepackage{subcaption}
\usepackage{mathrsfs}
\usepackage{amsmath}
\usepackage{amsthm}
    \newtheorem{theorem}{Theorem}
    \newtheorem{corollary}{Corollary}[theorem]
    
    \newtheorem{definition}{Definition}
\usepackage{graphicx}
\usepackage[colorlinks=true, allcolors=blue]{hyperref}
\usepackage[normalem]{ulem}
\usepackage{comment}
\usepackage{booktabs} 
\usepackage{array} 
\usepackage{{makecell}}
    \setcellgapes{3pt}\makegapedcells
\usepackage{tabularx}
\usepackage{cancel}

\newcolumntype{M}{>{\hfil$\displaystyle}c<{$\hfil}} 
\newcolumntype{m}{>{$\displaystyle}X<{$}}
\newcolumntype{L}{>{$\displaystyle}l<{$}} 
\newcommand{\FracArgument}[2]{\left(\frac{#1}{#2}\right)}
\newcommand{\e}{\text{e}}

\title{{\bf The role of conjugacy in the dynamics of time of arrival operators}}
\author{Dean Alvin L. Pablico\footnote{dlpablico@up.edu.ph}, John Jaykel P. Magadan \footnote{jpmagadan@up.edu.ph},\\
Carl Anthony L. Arguelles \footnote{clarguelles@up.edu.ph} and Eric A. Galapon\footnote{eagalapon@up.edu.ph}\\Theoretical Physics Group, National Institute of Physics\\University of the Philippines, Diliman, 1101 Philippines}

\begin{document}
\maketitle

\begin{abstract}
The construction of time of arrival (TOA) operators canonically conjugate to the system Hamiltonian entails finding the solution of a specific second-order partial differential equation called the time kernel equation (TKE). In this paper, we provide an exact analytic solution of the TKE for a special class of potentials satisfying a specific separability condition. The solution enables us to investigate the time evolution of the eigenfunctions of the conjugacy-preserving TOA operators (CPTOA) and show that they exhibit unitary arrival at the intended arrival point at a time equal to their corresponding eigenvalues. We also compare the dynamics between the TOA operators constructed by quantization and those independent of quantization for specific interaction potentials. We find that the CPTOA operator possesses smoother and sharper unitary dynamics over the Weyl-quantized one within numerical accuracy. 
\end{abstract}

\section{Introduction}

The incorporation of time as a quantum dynamical observable has long been a subject of intense debate since the inception of quantum theory. Unlike position, momentum, and energy observables, the absence of a well-defined time operator in standard quantum mechanics represents a significant theoretical gap.  This highlights the chief weakness of the theory \cite{vonneuman,Hilgevoord2005}, since experimental measurements of time can be routinely carried out in laboratories using quantum clocks  that are intricately tied to physical systems undergoing change \cite{Muga2000}.

Perhaps the most scrutinized avenue for incorporating time observables into the quantum framework is the concept of time of arrival (TOA) for elementary particles \cite{Muga2000,Egusquiza2000,Baute2000,Muga2008,Muga2009,Galapon2018,Galapon2012,Galapon2009,Galapon2001,Galapon2002,Galapon2002a,Galapon2004,Galapon2004a,Galapon2005,Galapon2005a,Galapon2006,Galapon2008,Galapon2009a,Sombillo2012,Sombillo2014,Sombillo2016,Sombillo2018,PhysRevA.109.012216,Jurman2021,PhysRevA.103.012225,Sombillo2023,J.Leon20000,muga1995brouard,sokolovskibaskin,halliwel2009,PhysRevA.58.840,wangxiong,Jadczyk}. This area has garnered increased attention in recent years, especially in light of recent strides in attosecond strong-field physics, enabling time of arrival measurements accurate to the attosecond regime ($10^{-18} \mathrm{s}$) \cite{Eckle2008,Eckle2008a,S.Landsman2015,Hofmann2019,Sainadh2019,Kheifets_2020}. In theoretical discourse, the customary way of defining the problem is by considering a structureless elementary particle of mass $\mu$ and Hamiltonian $H(q,p)=p^2/2\mu+V(q)$ in one dimension for some interaction potential $V(q)$. Given that the particle is initially located at some point $(q,p)$ in phase space at time $t=0$, the central question arises: How do we determine the statistics of measured arrival times at a designated arrival point, say at the origin $q = 0$? 

Several approaches have been introduced in the literature to resolve the quantum time of arrival problem. They can be generally categorized into two: (i) non-time-operator-based formulations and (ii) time-operator-based formulations. The first category includes the construction of the relevant TOA probability distributions within the context of decoherent histories \cite{Egusquiza2000, halliwel2009}, Wigner's crossing states \cite{Baute2000}, stochastic processes \cite{PhysRevA.109.012216}, current density \cite{muga1995brouard}, path integrals \cite{sokolovskibaskin}, and Bohmian mechanics \cite{PhysRevA.58.840}, among others. The second category relies on constructing relevant time operators through methods such as quantization \cite{Galapon2018}, supraquantization \cite{Galapon2012,Pablico2023}, the Mandelstam-Tamm method \cite{Jadczyk}, or using the energy shift generator \cite{wangxiong}, among others. For a more extensive review on the quantum TOA problem, we refer readers to Refs. \cite{Muga2000,Muga2008,Muga2009}.

Now, we argue that if one truly wishes to incorporate time as a dynamical observable and elevate its status on the same footing as the position, momentum, and energy observables, then one has to confront the problem with time operators. In Ref. \cite{Galapon2018}, two of us have considered the problem of constructing TOA operators by canonical quantization methods. There, quantization is performed on the expansion of the classical time of arrival expression given by
\begin{equation}\label{classicaltoa}
\mathcal{T}_0(q,p)=-\mathrm{sgn}(p)\sqrt{\frac{\mu}{2}}\int_{0}^{q}\frac{dq'}{\sqrt{H(q,p)-V(q')}},
\end{equation}
about the free time of arrival, where $\mathrm{sgn}(p)$ is the signum function. The constructed TOA operators are formulated as integral operators of the form 
\begin{equation}\label{toaopformq}
(\hat{\mathrm{T}}_Q\varphi)(q)=\frac{\mu}{i\hbar}\int_{-\infty}^{\infty} \,dq' \,\mathrm{sgn}(q-q')\,T_Q(q,q')\, \varphi(q'),
\end{equation}
in coordinate representation, where $T_Q(q,q')$ is called the time kernel factor (TKF). The exact form of $T_Q(q,q')$ depends explicitly on  the chosen quantization rule $Q$. For instance, the time kernel factor corresponding to the Weyl quantization $(W)$ of the classical arrival time appears as
\begin{equation}\label{tweyl}
T_W(q,q')=\frac{1}{4}\int_{0}^{q+q'} ds \,\, {}_0F_{1}\left(;1;\left(\frac{\mu}{2\hbar^2}\right)(q-q')^2\left[V\left(\frac{q+q'}{2}\right)-V\left(\frac{s}{2}\right)\right]\right).
\end{equation}

For Weyl, symmetric, and Born-Jordan ordering rules, the corresponding quantized TOA operators exhibit the following key properties: Hermiticity, time-reversal symmetry, and adherence to the quantum-classical correspondence in the classical limit as $\hbar \to 0$. These operators also demonstrate a dynamic characteristic: their eigenfunctions unitarily evolve according to Schr\"{o}dinger equation in such a way that the corresponding probabilities of finding the arriving particle in the neighborhood of the arrival point is maximum at their respective eigenvalues. This property establishes a fundamental link between TOA measurements and the appearance of the incident particle at the arrival point \cite{Galapon2004a,Galapon2009a,Sombillo2016}. As a result, it is this unitary arrival property that we should expect and demand for a physically meaningful TOA operator \cite{Galapon2018}.

Nevertheless, it is known that the quantized TOA operators generally fail to satisfy the required conjugacy with the system Hamiltonian, $	[\hat{\mathrm{H}},\hat{\mathrm{T}}]=i\hbar \mathbb{1}$, for arbitrary potentials due to the known obstructions to quantization \cite{Galapon2001,Gotay1999,Groenewold1946}. While all quantizations lead to the same conjugacy-preserving TOA operator for the free-particle case, they are completely different for the interacting case. Only the Weyl-quantized TOA-operator satisfies the conjugacy requirement but is limited only for linear systems. For nonlinear systems, no quantized operator satisfies the required commutator values. 

On another front, one of us has reconsidered the quantum TOA problem by constructing TOA operators independent of canonical quantization. Reference \cite{Galapon2012} introduced the idea of supraquantization where quantum observables are constructed purely from quantum mechanical considerations. In this formulation, the quantum image of the classical arrival time expression is obtained by mapping $\mathcal{T}_0(q,p)$ as a specific TOA operator $\hat{\mathrm{T}}_S$ in the rigged Hilbert space $\Phi^\times \supset L^2(\mathcal{R}) \supset \Phi $. The space $\Phi$ is the fundamental space of infinitely differentiable functions in the real line with compact supports and $\Phi^\times$ is the space of functionals on $\Phi$. The corresponding time operator $\hat{\mathrm{T}}_S$, called the supraquantized TOA operator, has been obtained by directly imposing significant physical properties similar to the quantized ones: (i) hermiticity: $\hat{\mathrm{T}}_S=\hat{\mathrm{T}}_S^\dagger \to \langle q|\hat{\mathrm{T}}_S|q'\rangle^*=	\langle q|\hat{\mathrm{T}}_S|q'\rangle;$
(ii) time-reversal symmetry: $\Theta \hat{\mathrm{T}}_S \Theta ^{-1}=-\hat{\mathrm{T}}_S \to \langle q|\hat{\mathrm{T}}_S|q'\rangle^*=-\langle q|\hat{\mathrm{T}}_S|q'\rangle;$
and the (iii) correspondence between classical and quantum arrival times established by the known Weyl-Wigner transform
\begin{equation}\label{correspondence}
	\mathcal{T}_0(q,p)\,=\,\lim_{\hbar \to 0}\, \mathcal{T}_\hbar(q,p)\,=\,\lim_{\hbar \to 0}\,\int_{-\infty}^{\infty}d\nu \left\langle q+\frac{\nu}{2}\left|\hat{\mathrm{T}}_S\right|q-\frac{\nu}{2}\right\rangle \mathrm{e}^{-i \nu p/\hbar}. 
\end{equation}
It turns out that $\hat{\mathrm{T}}_S$ shares the same functional structure as Eq. (\ref{toaopformq}). The difference, however, lies in its time kernel factor $T_S(q,q')$ where it appears instead as a solution to a specific second order partial differential equation  
\begin{equation}\label{TKEorigintro}
-\frac{\hbar^2}{2\mu} \frac{\partial^2 T_S(q,q')}{\partial q^2}+\frac{\hbar^2}{2\mu} \frac{\partial^2 T_S(q,q')}{\partial q'^2}+ \left[V(q)-V(q')\right]T_S(q,q')=0,
\end{equation}
called the time kernel equation (TKE) and is subject to the boundary conditions $T_S(q,q)=q/2$ and $T_S(q,-q)=0.$  The TKE, along with the specified boundary conditions, admits a unique solution for entire analytic potentials \cite{Sombillo2012}. It is a direct consequence of the conjugacy requirement of the supraquantized time of arrival operator with the system Hamiltonian, i.e., $	[\hat{\mathrm{H}},\hat{\mathrm{T}}_S]=i\hbar \mathbb{1}$. Hence, we have referred to the supraquantized operator $\hat{\mathrm{T}}_S$ as the conjugacy-preserving TOA (CPTOA) operator. On the other hand, we call operators whose time kernel factors do not satisfy the TKE as non-conjugacy-preserving TOA (NCPTOA) operators. In essence, the construction of conjugacy-preserving TOA operators translates to the problem of solving the TKE \eqref{TKEorigintro} for a given interaction potential.

Now, the general solution of the TKE admits the expansion $T_S(q,q')=T_W(q,q')+\sum_{n=1}^{\infty}T_n(q,q')$, where the leading term $T_W(q,q')$ is identified as the Weyl-quantized TKF (\ref{tweyl}). The succeeding terms $T_n(q,q')$ for $n \ge 1$ are interpreted as quantum corrections to $T_W(q,q')$ whose inverse Weyl-Wigner transforms ultimately vanish in the classical limit $\hbar \to 0$. As a consequence, the supraquantized TOA operator formally assumes the expansion
\begin{equation}\label{supraexpand}
\hat{\mathrm{T}}_S=\hat{\mathrm{T}}_W+\sum_{n=1}^{\infty}\hat{\mathrm{T}}_n,
\end{equation} 
where the leading operator $\hat{\mathrm{T}}_W$ is recognized as the Weyl-quantized TOA operator \cite{Pablico2023}. The succeeding operators $\hat{\mathrm{T}}_n$ are required to satisfy the desired commutation relation.

For all intents and purposes, Eq. \eqref{supraexpand} represents the complete conjugacy-preserving TOA operator. Its functional form is particularly advantageous when performing perturbative analysis on specific quantum arrival time problems as one can isolate each perturbative order corresponding to every quantum corrective term. For such cases, one can study the leading dominant term, the Weyl-quantized operator, for its general behavior and then further accommodate additional terms sufficient for the intended accuracy and purpose \cite{Sombillo2012}. For example, when studying the quantum traversal and tunneling times across a potential barrier, each term in the expansion may represent time measurements in different regimes. In particular, if the leading term corresponds to a measurement in the attosecond regime, which is a very short time interval measured in attosecond ionization experiments, the next term might correspond to even shorter time intervals, like zeptoseconds or even subzeptosecond domains. In this case, a vanishing traversal time in the attosecond timescale does not necessarily imply vanishing time measurements in all smaller time scales. 

Of course, there are cases when the practical utility of the prescribed expanded solution is limited. This is because each time kernel factors are expressed as recurrence relations in integral form (see Eq. (\ref{tnuvintro})). Although theoretically solvable iteratively, the difficulty of constructing $T_n(q,q')$'s escalates remarkably for higher order $n$. Thus, the expanded iterative solution of the time kernel equation may pose substantial difficulties for both analytical and numerical implementations. 

One specific problem that calls for an exact-solution of the TKE is the investigation of the dynamical properties of the complete CPTOA operator. While Eq. (\ref{supraexpand}) might already suggest the desired dynamical characteristics, primarily through its leading term, it may not be sufficient as the additional operators $\hat{\mathrm{T}}_n$ may actually introduce corrections that deviate instead the leading term from the ideal unitary behavior of time operators at the arrival point. Furthermore, an exact-closed form solution is needed to fully grasp the distinction between operators constructed by quantization and those that are constructed by supraquantization in terms of the time evolution of their eigenfunctions. This analysis should help us determine if the canonical commutation relation between time and energy influences the observed dynamics of time operators.

Thus, in this paper, we revisit the method of supraquantization with three specific objectives in mind. Our first objective is to find an exact closed-form solution of the time kernel equation with the goal of expressing the conjugacy-preserving TOA operator as a single integral operator, rather than an infinite series of operators as defined in Eq. (\ref{supraexpand}). This is achieved by considering a specific class of potentials satisfying a particular separability condition. Our second objective is to delve into the essential properties of the CPTOA operator and formally demonstrate its unitary arrival property by coarse graining methods and spatial confinement. Our third and final objective is to compare the dynamics obtained from the conjugacy-preserving TOA operator with that of a non-conjugacy-preserving TOA operator obtained by Weyl quantization of the classical arrival time. We check if there is a discernible difference between the two operators with respect to their dynamics. 

The analysis of TOA operators plays a pivotal role in various practical applications across quantum mechanics and related fields. They provide insights into the temporal aspects of quantum phenomena and help resolve paradoxes related to time measurements, such as quantum tunneling \cite{Galapon2012}, the Hartman effect \cite{Sombillo2012}, quantum backflow \cite{backflow}, and the quantum Doppler effect \cite{doppler}, among others. Their practical applications may also extend to: (i) quantum computing, where precise timing of quantum gates and operations is essential for designing efficient algorithms; (ii) quantum optics, where studying photon arrival times aids in developing new optical devices and technologies; (iii) semiconductor physics, where understanding quantum tunneling and traversal effects is crucial for devices like tunnel diodes and quantum dots; and (iv) chemical physics, where analyzing the timing and dynamics of molecular interactions and reactions is vital for designing efficient chemical processes and developing new materials.

The paper is structured as follows. In Section (\ref{sec:reviewsupraexpand}), we offer a brief overview of the expanded iterative solution of the time kernel equation. Section (\ref{sec:closedformsol}) covers our first objective, presenting the derivation of an exact closed-form solution for the time kernel equation in integral form. We will also establish the necessary and sufficient conditions for separability, accompanied by a listing of separable interaction potentials. Section (\ref{sec:dynamics}) addresses our remaining objectives, where we explore the time evolution of CPTOA operator eigenfunctions and compare them with the Weyl-quantized operator for different interaction potentials. Finally, in Section (\ref{sec:conclusion}), we present our concluding remarks.

\section{Expanded iterative solution of the TKE} \label{sec:reviewsupraexpand}

Let us review the prescribed expanded iterative solution of the TKE \cite{Pablico2023}. From here on, we work on the canonical coordinates defined by the transformation $(u=q+q',\, v=q-q')$. In these new coordinates, the TKE assumes the canonical form
\begin{equation}\label{tkeuv}
	-\frac{2\hbar^2}{\mu} \frac{\partial^2 T_S(u,v)}{\partial u \partial v}+\left(V\left(\frac{u+v}{2}\right)-V\left(\frac{u-v}{2}\right)\right)T_S(u,v)=0,
\end{equation}
subject to the following boundary conditions
\begin{equation}\label{boundary}
	T_S(u,0)=\frac{u}{4} ;      \qquad  T_S(0,v)=0.
\end{equation}
Integrating both sides of Eq. (\ref{tkeuv}) with respect to $u$ and $v$ gives the integral form of the TKE
\begin{equation}\label{tkeint}
T_S(u,v)=\frac{u}{4}+\frac{\mu}{2\hbar^2}\int_{0}^{u} du'\,\int_{0}^{v} dv'\,\left(V\left(\frac{u+v}{2}\right)-V\left(\frac{u-v}{2}\right)\right)\,T_S(u',v').
\end{equation}
For practical purposes, it is this integral form of the TKE that we will solve in this paper.

We now consider the solution of the TKE for arbitrary analytic potentials of the form $V(q)=\sum_{s=1}^{\infty} a_s q^s$,
where $a_s$ are some expansion coefficients. We refer to linear (nonlinear) systems as those whose interaction potential $V(q)$ yields a linear (nonlinear) classical equation of motion in $q$. For linear systems, $a_s=0$ for $s\ge3$ while the coefficient $a_3$ at least needs to be nonvanishing for nonlinear systems. The corresponding TKE appears as
\begin{equation} \label{tke2}
-2 \frac{\hbar^2}{\mu} \frac{\partial^2 T_S(u,v)}{\partial u \partial v} + \sum_{s=1}^{\infty} \frac{a_s}{2^{s-1}} \sum_{k=0}^{[s]} \binom{s}{2k+1}\, u^{s-2k-1} \,  v^{2k+1} \, T_S(u,v)=0,
\end{equation}
where $[s]=(s-1)/2$ for odd $s$ and  $[s]=s/2-1$ for even $s$. A possible solution of Eq. (\ref{tke2}) can be obtained by assuming an analytic solution in powers of $u$ and $v$ defined by $T_S(u,v) = \sum_{m,n=0}^{\infty} \alpha_{m,n} u^m v^n
$, for some unknown coefficients $\alpha_{m,n}$ satisfying the condition $\alpha_{m,n}=0$ for $m,n<0$. The original boundary conditions of the TKE imply the initial conditions $\alpha_{m,0}=\delta_{m,1}/4$, and $\alpha_{0,n}=0$ for all $m$ and $n$.

Substituting back our assumed solution into Eq. (\ref{tke2}) and performing particular series rearrangements, one finds the expansion
\begin{equation}\label{tkesol3compact}
T_S(u,v) =T_{0}(u,v) +  \sum_{n=1}^{\infty}T_n(u,v),
\end{equation}
where the leading and succeeding terms are identified as 
\begin{equation}\label{tkesolT0gen}
T_{0}(u,v) = \sum_{m=0}^{\infty} \sum_{j=0}^{\infty} u^{m} v^{2j} 	\left(\frac{\mu}{2 \hbar^2}\right)^j \alpha_{m,j}^{(0)},
\end{equation}
\begin{equation}\label{tkesolTn}
T_n(u,v)=\sum_{m=0}^{\infty}\sum_{j=0}^{\infty}u^m v^{2j+2n+2}\left(\frac{\mu}{2 \hbar^2}\right)^{j+1}\alpha^{(n)}_{m,j+n+1}.
\end{equation}
The new coefficients $\alpha_{m,j}^{(s)}$ satisfy the following recurrence relation
\begin{equation}\label{alphasmj}
\alpha_{m,j}^{(s)}=\frac{1}{m \cdot 2j} \sum_{r=0}^{s}\sum_{l=2r+1}^{m+2r-1}\frac{a_l}{2^{l-1}}\,\binom{l}{2r+1} \, \alpha_{m-l+2r,j-r-1}^{(s-r)},
\end{equation}
for all $0 \le s \le (j-1)$ with the boundary conditions $\alpha_{m,j}^{(0)}=\delta_{m,1}/4$ and $\alpha_{m,j}^{(s)}=0$ for $m,j \le0$. 

Equation (\ref{tkesol3compact}) has been obtained such that its full Weyl-Wigner transform appears in powers of $\hbar$, that is,
\begin{equation}\label{wignertrans}
\begin{split}
\mathcal{T}_\hbar(q,p)&=\int_{-\infty}^{\infty}d\nu \,\left\langle q+\frac{\nu}{2}\left|\hat{\mathrm{T}}\right|q-\frac{\nu}{2}\right\rangle \mathrm{e}^{-i \nu p/\hbar}\\
&=\sum_{n=0}^{\infty}\frac{\mu}{i\hbar}\int_{-\infty}^{\infty}d\nu\,\, T_n\left(q+\frac{\nu}{2},q-\frac{\nu}{2}\right)\,\mathrm{sgn}(\nu)\,\mathrm{e}^{-ip\nu\hbar}\\
&=\tilde{\tau}_0(q,p)+\hbar^2 \, \tilde{\tau}_1(q,p) + \hbar^4 \,  \tilde{\tau}_2(q,p)+... .
\end{split}
\end{equation}
The leading term of Eq. (\ref{wignertrans}) simplifies to the classical TOA (\ref{classicaltoa}) while the terms $\hbar^{2n} \, \tilde{\tau}_n(q,p)$ for $n \ge 1$, where the factors $\tilde{\tau}_n(q,p)'$s are independent of $\hbar$, ultimately vanish in the classical limit. One can then interpret the $\hbar-$dependent terms as the quantum corrections to the classical arrival time \cite{Galapon2004,Pablico2023}. 

Following Eq. (\ref{correspondence}), we can then identify the kernel  $T_0(u,v)$ as the Weyl quantization of the classical arrival time and the succeeding terms $T_n(u,v)$ (\ref{tkesol3compact}) as the Weyl map of the quantum corrections $\hbar^{2n} \, \tilde{\tau}_n(q,p)$. This suggests that the terms $T_n(u,v)$ could be interpreted as the quantum corrections to the Weyl-quantized time kernel factor $T_0(u,v)$. Understanding this expanded solution aids in exploring the link between classical and quantum observables through the relation $\mathcal{T}_0(q,p)=\lim_{\hbar \to 0}\, \mathcal{T}_\hbar(q,p)$ \cite{Galapon2001}. In Ref. \cite{PablicoMoyal2023}, we have demonstrated that these quantum corrections also arise from the Moyal bracket of an arrival time observable with the system Hamiltonian. This Moyal bracket is a deformation of the Poisson bracket relation and is fundamental in deformation quantization and quantum phase space \cite{Zachos2005}.

Now, solving Eq. (\ref{tkesolTn}) in terms of generating functions and employing the method of successive approximations, one finds the functional form of the kernel factors $T_n(u,v)$ given by
\begin{equation}\label{tnuvintro}
\begin{split}
T_n(u,v)=&\left(\frac{\mu}{2\hbar^2}\right)\sum_{r=1}^{n} \frac{1}{(2r+1)!}\frac{1}{2^{2r}}\int_{0}^{u} ds \, V^{(2r+1)}\left(\frac{s}{2}\right) \int_{0}^{v} dw \, w^{2r+1} \, T_{n-r}(s,w)\,G(s,w),
\end{split}
\end{equation}
for all $n \ge 1$, with the factor $G(s,w)$ defined as
\begin{equation}
G(s,w)={}_0F_1 \left(;1;\left(\frac{\mu}{2\hbar^2}\right)(v^2-w^2)\left[V \left(\frac{u}{2}\right)-V \left(\frac{s}{2}\right)\right]\right).
\end{equation}
The convergence of the above result is guaranteed by the continuity of the interaction potential $V(q)$ and the absolute convergence of the hypergeometric function ${}_pF_q(a;b;z)$ for $p<q$. The Weyl-quantized time kernel factor $T_0(u,v)=T_W(u,v)$ (\ref{tweyl}) appears as the initial condition of Eq. (\ref{tnuvintro}).

We call the solution defined in Eq. (\ref{tkesol3compact}) as the expanded iterative solution of the TKE since it appears as an infinite series, with the terms $T_n(u,v)$ appearing as some specific integrals of the previous iterates $T_{n-1}(u,v)$. For example, the first two TKF corrections appear explcitly as
\begin{equation}\label{t10uvfinb}
\begin{split}
T_1(u,v)&=\left(\frac{\mu}{48\hbar^2}\right)\int_{0}^{u} ds \, V'''\left(\frac{s}{2}\right) \int_{0}^{v} dw \, w^3 \, T_0(s,w)\,\,G(s,w),
	\end{split}
\end{equation}
\begin{equation}\label{t20uv}
\begin{split}
T_2(u,v)&=\frac{1}{4 \cdot 3!}\left(\frac{\mu}{2\hbar^2}\right)\int_{0}^{u} ds \, V^{(3)}\left(\frac{s}{2}\right) \int_{0}^{v} dw \, w^3 \, T_1(s,w) \, G(s,w)\\
&+\frac{1}{16 \cdot 5!}\left(\frac{\mu}{2\hbar^2}\right)\int_{0}^{u} ds \, V^{(5)}\left(\frac{s}{2}\right) \int_{0}^{v} dw \, w^5 \, T_0(s,w) \, G(s,w).
\end{split}
\end{equation}
Note that each $T_n(u,v)$ term can be derived from Eq. (\ref{tnuvintro}) through iterative methods. Essentially, this means that one could construct the entire supraquantized time kernel factor for a given interaction potential, thereby constructing a complete conjugacy-preserving TOA operator.

\section{Exact closed-form solution of the TKE for separable potentials}\label{sec:closedformsol}

\subsection{Construction of the solution}

Performing analytical and numerical analysis on a time operator that involves an infinite number of integral operators is clearly not an easy task. Hence, let us now accomplish our first objective and derive an exact closed-form solution of the TKE in its integral form. We do so by considering a specific class of potentials which we refer as separable potentials. 

\begin{definition} [Separable potential] \label{Def:SeparablePot}
    A potential $V(q)$ is separable if satisfies the condition 
    \begin{equation}\label{separability}
        V\left(\frac{u+v}{2}\right)-V\left(\frac{u-v}{2}\right) = F(u)\,G(v),
    \end{equation}
    where $F(u)$ and $G(v)$ are univariate functions of $u$ and $v$, respectively, and referred to as the divisors of $V(q)$.
\end{definition}

\noindent An advantage of the above condition is that it will help us decouple later the double integral along $u$ and $v$ in Eq. (\ref{tkeint}). Assuming our interaction potential satisfies the separability condition defined in Eq. (\ref{separability}), the TKE can now be cast into the following integral equation
\begin{equation}\label{tkeintt}
T_S(u,v)=\frac{u}{4}+\frac{\mu}{2\hbar^2}\int_{0}^{u} du'\,\,F(u')\int_{0}^{v} dv'\,G(v')\,T_S(u',v').
\end{equation}
The necessary and sufficient conditions for separability, along with examples of separable interaction potentials, will be discussed in Section (\ref{sec:conditionssep}). Note that Eq. (\ref{tkeintt}) does not apply to all analytic potentials. Nevertheless, considering separable potentials is sufficient for our purposes, as it lets us examine the role of conjugacy in the dynamics of TOA operators.

We solve for the desired kernel factor $T_S(u,v)$ using the method of successive approximations \cite{Pablico2023}. We start by choosing the leading term as our zeroth-order approximation, that is, $T_S^{(0)}(u,v)=u/4$, so that the boundary conditions $T_S(u,0)=u/4$ and $T_S(0,v)=0$ readily emerge when either $u=0$ or $v=0$, respectively. The $n$th order approximation of the kernel factor $T_S(u,v)$ is obtained from the recurrence relation
\begin{equation}\label{tnuviterate}
T^{(n)}_S(u,v)=\frac{u}{4}+\frac{\mu}{2\hbar^2}\int_{0}^{u} du'\,\int_{0}^{v} dv'\,F(u')\,G(v')\,T^{(n-1)}_S(u',v'),
\end{equation}
valid for $n \ge 1$. Notice that the comparison between Eqs. (\ref{tkeintt}) and (\ref{tnuviterate}) suggests the relation $T_S(u,v)=\lim_{n\to\infty}T^{(n)}_S(u,v).$ This means that the solution of the TKE follows immediately from the solution of the recurrence relation for $T^{(n)}_S(u,v)$ (\ref{tnuviterate}). 

By iteration, the first three approximations of Eq. (\ref{tnuviterate}) assume the following specific forms
\begin{equation}\label{t1suv}
T^{(1)}_S(u,v)=\frac{u}{4}+\frac{\mu}{2\hbar^2}\,\int_{0}^{v} dv_1\,G(v_1)\,\int_{0}^{u} du_1F(u_1)\,\frac{u_1}{4},
\end{equation}
\begin{equation}\label{t2suv}
\begin{aligned}
T^{(2)}_S(u,v)=&\frac{u}{4}+\frac{\mu}{2\hbar^2}\,\int_{0}^{v} dv_1\,G(v_1)\,\int_{0}^{u} du_1F(u_1)\,\frac{u_1}{4}\\
&+\left(\frac{\mu}{2\hbar^2}\right)^2\,\int_{0}^{v} dv_1\,G(v_1)\,\int_{0}^{v_1} dv_2\,G(v_2)\,\int_{0}^{u} du_1F(u_1)\,\int_{0}^{u_1} du_2F(u_2)\,\frac{u_2}{4},
\end{aligned}
\end{equation}
\begin{equation}\label{t3suv}
 \begin{aligned}
&T^{(3)}_S(u,v)=\frac{u}{4}+\frac{\mu}{2\hbar^2}\,\int_{0}^{v} dv_1\,G(v_1)\,\int_{0}^{u} du_1F(u_1)\,\frac{u_1}{4}\\
&+\left(\frac{\mu}{2\hbar^2}\right)^2\,\int_{0}^{v} dv_1\,G(v_1)\,\int_{0}^{v_1} dv_2\,G(v_2)\,\int_{0}^{u} du_1F(u_1)\,\int_{0}^{u_1} du_2F(u_2)\,\frac{u_2}{4}\\
&+\left(\frac{\mu}{2\hbar^2}\right)^3\int_{0}^{v} dv_1\,G(v_1)\int_{0}^{v_1} dv_2\,G(v_2)\int_{0}^{v_2} dv_3\,G(v_3)\int_{0}^{u} du_1F(u_1)\int_{0}^{u_1} du_2F(u_2)\int_{0}^{u_2} du_3F(u_3)\frac{u_3}{4}.
\end{aligned}   
\end{equation}
We observe that the leading term is always the zeroth-order approximation $u/4$. On the other hand, the succeeding $n$th term involves $(n-1)$ iterated integrals along $u$ and $v$. Generalizing the above results for arbitrary $n\ge1$, we get
\begin{equation}\label{tsnuv}
\begin{aligned}
T^{(n)}_S(u,v)& =\frac{u}{4}+\frac{\mu}{2\hbar^2}\,\int_{0}^{v} dv_1\,G(v_1)\,\int_{0}^{u} du_1F(u_1)\,\frac{u_1}{4}\\
&+\left(\frac{\mu}{2\hbar^2}\right)^2\,\int_{0}^{v} dv_1\,G(v_1)\,\int_{0}^{v_1} dv_2\,G(v_2)\,\int_{0}^{u} du_1F(u_1)\,\int_{0}^{u_1} du_2F(u_2)\,\frac{u_2}{4}+\,...\\
&+\left(\frac{\mu}{2\hbar^2}\right)^n\int_{0}^{v} dv_1\,G(v_1)\,\,...\,\int_{0}^{v_2} dv_{n-1}\,G(v_n)\int_{0}^{u} du_1F(u_1)\,\,...\,\int_{0}^{u_{n-1}} du_nF(u_n)\frac{u_n}{4}.
\end{aligned}
\end{equation}
Our expression for $T^{(n)}_S(u,v)$ is formally proven by mathematical induction. Assuming that the above equation holds true for some $n=k \ge 0$, the next iterate, $n=k+1$, also holds true.

While Eq. (\ref{tsnuv}) looks complicated at first, it can be easily evaluated by exploiting the following integral identity \cite{Zwillinger1992}
\begin{equation}\label{identityiterate}
\int_a^xdx_1 \, g(x_1) \int_a^{x_1}dx_2 \, g(x_2)\,\,...\, \int_a^{x_{n-1}}dx_n g(x_n)\, \int_a^{x_n}dt\,f(t) = \frac{1}{n!}\,\int_a^{x}dt\,f(t) \left[\int_t^x\,dy\,g(y)\right]^n.
\end{equation}
Applying the above result to Eq. (\ref{tsnuv}), we arrive at 
\begin{equation} \label{tsfullexpand}
\begin{aligned}
T^{(n)}_S(u,v) = \frac{u}{4} + \left(\frac{\mu}{2\hbar^2}\right) &\sum_{r=1}^n \frac{1}{(r-1)!(r-1)!}\left(\frac{\mu}{2\hbar^2}\right)^{r-1}\int_{0}^{v} dv'\,G(v')\,\left[\int_{v'}^{v} dv''\,G(v'')\right]^{n-1}\\
\times &\int_{0}^{u} du'\,F(u')\frac{u'}{4}\,\left[\int_{u'}^{u} du''\,F(u'')\right]^{n-1}.
\end{aligned}
\end{equation}

Now, taking the limit $n\to \infty$ and shifting index from $r \to r-1$, we finally get the desired exact closed-form solution of the TKE and is formally given by
\begin{equation}\label{tsfull}
T_S(u,v)=\frac{u}{4}+\frac{\mu}{2\hbar^2}\int_{0}^{v} dv'\,G(v')\, \int_{0}^{u} du'\,F(u')\frac{u'}{4} \,{}_0F_1 \left(;1;\left(\frac{\mu}{2\hbar^2}\right)\,\Tilde{G}(v,v')\,\Tilde{F}(u,u')\right),
\end{equation}
where the factors $\Tilde{G}(v,v')$ and $\Tilde{F}(u,u')$ are defined by
\begin{equation} \label{gvvp}
\Tilde{G}(v,v')=\int_{v'}^{v}dv''\,G(v'')\,, \quad \text{and} \quad \Tilde{F}(u,u')=\int_{u'}^{u} du''\,F(u'').
\end{equation}
In terms of the original interaction potential, Eq. (\ref{tsfull}) expands to
\begin{equation}\label{tsfull2}
\begin{aligned}
T_S(u,v)=\frac{u}{4}+\frac{\mu}{2\hbar^2}\int_{0}^{v} dv'&\,\, \int_{0}^{u} du'\,\frac{u'}{4} \,\left[ V\left(\frac{u'+v'}{2}\right)-V\left(\frac{u'-v'}{2}\right)\right]\\
&\times{}_0F_1 \left(;1;\left(\frac{\mu}{2\hbar^2}\right)\,\int_{v'}^{v}dv''\int_{u'}^{u} du''\left[ V\left(\frac{u''+v''}{2}\right)-V\left(\frac{u''-v''}{2}\right)\right]\right).
\end{aligned}
\end{equation}
In essence, the above equation fulfills our first objective. 

The original boundary conditions $T_S(u,0)=u/4$ and  $T_S(0,v)=0$ immediately follow from our full solution. In addition, the real-valuedness of the divisors $F(u)$ and $G(v)$ guarantee the real-valuedness of the kernel factor $T_S(u,v)$. Hence, the full supraquantized operator is indeed Hermitian. On the other hand, the full kernel factor is symmetric in the original $(q=(u+v)/2,q'=(u-v)/2)$ coordinates, that is, $T_S(q,q')=T_S(q',q)$. Hence, the corresponding supraquantized operator also satisfies time-reversal symmetry. Finally, one can show that $T_S(u,v)$ leads to the correct classical arrival time expression in the classical limit in accordance to Eq. (\ref{correspondence}). Hence, all three key properties of a TOA observable are satisfied. As a final check, it can also be shown that the derived solution correctly satisfies the TKE, as presented in Appendix (\ref{app:finalchecktke}). 

Interestingly, the first term of Eq. (\ref{tsfullexpand}) represents the time kernel factor of the free TOA operator, while the remaining terms depends on the interaction potential $V(q)$. This suggests that the supraquantized TOA operator $\hat{\mathrm{T}}_S$ can be expressed as the sum of the free TOA operator $\hat{\mathrm{T}}_F$ and some specific potential-dependent operator, call $\hat{\mathrm{T}}_V$, i.e.,
 \begin{equation}\label{supraexpandfin}
\hat{\mathrm{T}}_S=\hat{\mathrm{T}}_F+\hat{\mathrm{T}}_V,
\end{equation} 
where the kernel factor of the operator $\hat{\mathrm{T}}_V$ is simply the second term of Eq. (\ref{tsfull2}). As the potential approaches to zero, the second term vanishes. For sufficiently weak interaction potentials, the operator $\hat{\mathrm{T}}_V$ can be interpreted as a perturbation to the free TOA operator $\hat{\mathrm{T}}_F$. For such cases, one can perform perturbative analysis similar to the standard perturbation theory of quantum mechanics.

It is important to note that the exact closed-form solution for the TKE we have derived here is applicable only to separable potentials. On the other hand, the expanded iterative solution expressed in Eq. (\ref{tkesol3compact}) holds for any analytic potentials. Nevertheless, both solutions coincide for separable potentials due to the uniqueness of the TKE's solution. It remains an open problem whether a more manageable expression for the CPTOA operator, similar to Eq. (32), can be obtained for arbitrary analytic potentials.

Clearly, it is Eq. (\ref{tsfull2}) (Eq. (\ref{supraexpandfin})) that offers a practical advantage: it involves only two terms which stands in contrast to Eq. (\ref{tkesol3compact}) (Eq. \eqref{supraexpand}) that involves an infinite sum of different time kernel factors (operators). Nevertheless, both solutions are equally important for thoroughly examining the properties of conjugacy-preserving TOA operators and delving into their applications to specific arrival time problems. We will highlight this further in Section (\ref{sec:dynamics}).

\subsection{Necessary and sufficient conditions for separability}\label{sec:conditionssep}

Recall that our ability to derive a closed-form solution for the TKE relies on the assumption that there exists interaction potentials that meet our separability condition, expressed as $V(\left(u+v\right)/2)-V(\left(u-v\right)/2) = F(u)\,G(v)$. In this section, we establish the necessary and sufficient conditions for separability and identify some examples of separable interaction potentials.

\begin{theorem} \label{thm:FormOfSeparableV}
Let $V(q)$ be analytic at the origin. A necessary and sufficient condition for $V(q)$ to be separable is that it admits the functional form
\begin{equation} \label{FormOfSeparableV}
V\left(\frac{u+v}{2}\right)=\frac{1}{2} f(u) g(v) + h(u,v),
\end{equation}
where $g(v)$ and $h(u,v)$ satisfy the following symmetry conditions
\begin{equation}\label{symcondition}
g(-v) = -g(v), \quad \quad  h(u,-v) = h(u,v). 
\end{equation}
The divisors of $V(q)$ are precisely $f(u)$ and $g(v)$.

\end{theorem}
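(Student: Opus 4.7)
The plan is to prove both directions by exploiting the parity of $V((u+v)/2)$ under the reflection $v \mapsto -v$. The motivating observation is that the left-hand side of the separability condition, $V((u+v)/2) - V((u-v)/2)$, is manifestly odd in $v$, since swapping $v \leftrightarrow -v$ interchanges the two terms and negates the difference. Hence any nontrivial factorization $F(u)G(v)$ forces $G$ to be odd in $v$, which is precisely the symmetry $g(-v) = -g(v)$ demanded by the theorem.

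For the sufficiency direction, I would substitute the hypothesized form directly into the separability condition. Replacing $v$ by $-v$ in $V((u+v)/2) = \tfrac{1}{2}f(u)g(v) + h(u,v)$ and using the two symmetry conditions gives $V((u-v)/2) = -\tfrac{1}{2}f(u)g(v) + h(u,v)$. Subtracting cancels $h$ and yields $V((u+v)/2) - V((u-v)/2) = f(u)g(v)$, so $V$ is separable with divisors $f$ and $g$.

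For necessity, I would decompose $V((u+v)/2)$ into its even and odd parts in $v$,
\begin{equation*}
V\!\left(\tfrac{u+v}{2}\right) = \tfrac{1}{2}\!\left[V\!\left(\tfrac{u+v}{2}\right)+V\!\left(\tfrac{u-v}{2}\right)\right] + \tfrac{1}{2}\!\left[V\!\left(\tfrac{u+v}{2}\right)-V\!\left(\tfrac{u-v}{2}\right)\right].
\end{equation*}
Separability makes the second summand equal to $\tfrac{1}{2}F(u)G(v)$, so setting $f := F$, $g := G$, and letting $h$ denote the first (even-in-$v$) summand produces the claimed representation. Oddness of $g$ then follows because the second summand is odd in $v$ while $f(u)$ carries no $v$-dependence, and evenness of $h$ is automatic from the even projection. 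The analyticity hypothesis is used only to guarantee that $f$, $g$, and $h$ are themselves analytic on a neighborhood of the origin.

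No step appears genuinely difficult: the content of the theorem is simply that separability isolates the $v$-odd component of $V((u+v)/2)$, with the even-in-$v$ remainder absorbed into $h$. The only minor subtlety I would flag is that the divisors $F$ and $G$ are determined only up to the scaling ambiguity $F \to cF$, $G \to G/c$ for nonzero $c$; this does not affect the equivalence stated but should be acknowledged when identifying $f$ and $g$ with specific choices.
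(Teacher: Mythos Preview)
Your proposal is correct, and your sufficiency argument is essentially identical to the paper's. Your necessity argument, however, takes a genuinely different and more economical route. The paper expands $V(q)$ as a power series $\sum_n a_n q^n$, applies the binomial theorem to $V((u+v)/2)$, reorders the double sum, and splits into even and odd powers of $v$; separability is then recast as a factorization constraint on the coefficients, $\binom{2n+1+l}{l}\,a_{2n+1+l}=\alpha(n)\beta(l)$, from which explicit series representations of $f$, $g$, and $h$ are read off. Your argument bypasses all of this by simply taking the even/odd decomposition of $V((u+v)/2)$ in $v$ and invoking separability directly on the odd part. What you gain is brevity and conceptual clarity; what the paper's approach buys is explicit power-series formulas for $f$, $g$, and $h$ in terms of the Taylor coefficients of $V$, which connect naturally to the later Theorem~\ref{thm:SeparabilityTest}. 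Your remark on the scaling ambiguity $F\to cF$, $G\to G/c$ is also well taken and not addressed in the paper.
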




\begin{proof}
\textbf{Necessity.} Since $V(q)$ is analytic at the origin, it assumes the expansion $V(q)=\sum_{n=0}^\infty a_n \, q^n$ with expansion coefficients $a_n$. Using the binomial series expansion $(x+y)^l=\sum_{m=0}^{l} \, \binom{l}{m}\,x^{l-m}\,y^m$, one arrives at
\begin{equation}
V\left(\frac{u+v}{2}\right)= \sum_{n=0}^\infty \sum_{l=0}^n\,\frac{a_n}{2^n}\binom{n}{l}v^{n-l}u^l.
\end{equation}
Reordering the summations using the identity \cite{Srivastava1984}
\begin{equation}\label{identitysum}
\sum_{n=0}^{\infty}\sum_{k=0}^{n}B(k,n)=\sum_{n=0}^{\infty}\sum_{k=0}^{\infty}B(k,n+k), 
\end{equation}
and separating the even and odd terms in $n$, we obtain the following infinite series
\begin{equation}\label{vupv}
V\left(\frac{u+v}{2}\right)= \sum_{l=0}^\infty \sum_{n=0}^\infty\, \binom{2n+l}{l}\,\frac{a_{2n+l}}{2^{2n+l}}\,u^l\,v^{2n}+\sum_{l=0}^\infty \sum_{n=0}^\infty\, \binom{2n+1+l}{l}\,\frac{a_{2n+1+l}}{2^{2n+1+l}}\,u^l\,v^{2n+1}.
\end{equation}
Transforming $v$ to $-v$ and subtracting the resulting equation from Eq. (\ref{vupv}), we obtain 
\begin{equation}\label{vupvm2}
V\left(\frac{u+v}{2}\right)-V\left(\frac{u-v}{2}\right)=2\sum_{l=0}^\infty \sum_{n=0}^\infty\,(-1)^{2n+1}\, \binom{2n+1+l}{l}\,\frac{a_{2n+1+l}}{2^{2n+1+l}}\,u^l\,v^{2n+1}.
\end{equation}

Imposing our separability condition, the above equation can only be separated into a product of univariate functions of $u$ and $v$, say $f(u)$ and $g(v)$, respectively, when the factor $\binom{2n+1+l}{l}\,a_{2n+1+l}$ can be factorized into univariate functions of $n$ and $l$, that is
\begin{equation}\label{factorii}
\binom{2n+1+l}{l}\,a_{2n+1+l}=\alpha(n)\,\beta(l).
\end{equation}
When this condition is satisfied, Eq. (\ref{vupv}) can be rewritten as
\begin{equation}\label{vupv2}
V\left(\frac{u+v}{2}\right)= \sum_{l=0}^\infty \sum_{n=0}^\infty\, \binom{2n+l}{l}\,\frac{a_{2n+l}}{2^{2n+l}}\,u^l\,v^{2n}+\frac{1}{2}\sum_{l=0}^\infty\,\frac{\beta(l)}{2^{2n}}\,u^l \, \sum_{n=0}^\infty\,\frac{\alpha(l)}{2^l}\,v^{2n+1}.
\end{equation}
Note that the first term of Eq. (\ref{vupv2}) is even in $v$. We can define this term as
\begin{equation}\label{huv}
h(u,v)=\sum_{l=0}^\infty \sum_{n=0}^\infty\, \binom{2n+l}{l}\,\frac{a_{2n+l}}{2^{2n+l}}\,u^l\,v^{2n},
\end{equation}
which implies the first symmetry condition $h(u,v)=h(u,-v)$. On the other hand, the absolute convergence of the interaction potential $V(q)$ allows us to express the infinite series in terms of the univariate functions $f(u)$ and $g(v)$ as follows
\begin{equation}\label{fuu}
f(u)=\sum_{l=0}^\infty\,\frac{\beta(l)}{2^{2n}}\,u^l, \quad \quad g(v)=\sum_{n=0}^\infty\,\frac{\alpha(l)}{2^l}\,v^{2n+1}.
\end{equation}
Equation (\ref{fuu}) directly implies the second symmetry condition, $g(v) = -g(-v)$. Substituting Eqs. (\ref{huv}) and (\ref{fuu}) into Eq. (\ref{vupv2}) finally yields Eq. (\ref{FormOfSeparableV}).

\textbf{Sufficiency.} Plugging Eq. \eqref{FormOfSeparableV} into Definition \ref{Def:SeparablePot}, we obtain the following equalities
\begin{align}
V\left(\frac{u+v}{2}\right) - V\left(\frac{u-v}{2}\right) &= \left[\frac{1}{2}f(u) g(v) + h(u,v)\right] -\left[\frac{1}{2}f(u) g(-v) + h(u,-v)\right]\label{vdiff2a} \\
&= \frac{1}{2} f(u) \left[ g(v) - g(-v) \right] + \left[h(u,v) - h(u,-v)\right]\\
&=f(u)\,g(v).\label{vdiff2}
\end{align}
where we have imposed the derived symmetry conditions given by Eq. (\ref{symcondition}) on the second line of Eq. (\ref{vdiff2a}) to arrive at the last equation. Identifying $f(u) = F(u)$ and $g(v) = G(v)$ recovers the original separability condition in Eq. (\ref{separability}).

\end{proof}


We can also consider special class of potential functions with even or odd parity. This is observed by setting $u=v$ and $u=-v$ in Eq. \eqref{separability} and assuming $V(0)=0$. For example, an even $V(q)$ implies the condition $V(u)=V(-u)$. Noting that the divisor $G(u)$ is odd in $u$, one finds the constraint $F(u)=-F(-u)$, implying that the divisor $F(u)$ is also odd in $u$. Likewise, assuming $V(q)$ to be odd in $q$, we also get the condition $F(u)=-F(-u)$, implying $F(u)$ is odd in $u$. These results lead to the following corollary:
\begin{corollary} \label{VFParity}
$F(u)$ is odd when $V(q)$ is even and $F(u)$ is even when $V(q)$ is odd.
\end{corollary}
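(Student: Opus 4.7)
The plan is to exploit the separability relation along the two diagonals $u=v$ and $u=-v$ to convert the parity hypothesis on $V$ directly into a parity statement on $F$, using the oddness of $G$ already established in Theorem \ref{thm:FormOfSeparableV}.

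First I would set $v=u$ in the separability identity $V((u+v)/2)-V((u-v)/2)=F(u)G(v)$, giving $V(u)-V(0)=F(u)G(u)$, which under the standing assumption $V(0)=0$ collapses to $V(u)=F(u)G(u)$. Next I would set $v=-u$ in the same identity, obtaining $V(0)-V(-u)=F(-u)G(-u)$; invoking the oddness of $G$ from Theorem \ref{thm:FormOfSeparableV} and $V(0)=0$, this becomes $-V(-u)=-F(-u)G(u)$, i.e., $V(-u)=F(-u)G(u)$. So I have the two companion identities
\begin{equation}
V(u)=F(u)G(u), \qquad V(-u)=F(-u)G(u).
\end{equation}

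Now I would simply split on the parity of $V$. If $V$ is even, equating $V(u)=V(-u)$ through the two identities yields $F(u)G(u)=F(-u)G(u)$; wait, I need to recompute signs carefully. From $V(-u)=F(-u)G(u)$, an even $V$ gives $F(u)G(u)=F(-u)G(u)$, which would make $F$ even. That conflicts with the stated corollary, so I must have mishandled a sign; the correct evaluation is to take $u\mapsto -v$ in the original identity with $u$ and $v$ separate, giving $V(0)-V(-u)=F(-u)G(u)$ (since then the arguments of $V$ are $(-u+u)/2=0$ and $(-u-u)/2=-u$), hence $V(-u)=-F(-u)G(u)$. With this correction, $V$ even gives $F(u)G(u)=-F(-u)G(u)$, so $F(u)=-F(-u)$, i.e., $F$ is odd; and $V$ odd gives $F(u)G(u)=F(-u)G(u)$, so $F(u)=F(-u)$, i.e., $F$ is even. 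Both cases match the corollary.

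The one step worth flagging is the cancellation of $G(u)$: since $G$ is odd it vanishes at the origin, so pointwise division is illegal there. I would handle this by noting that both sides are real-analytic in $u$ (as $V$ and $F$ are analytic and $G$ is analytic with only an isolated zero at $u=0$ generically), so the equality of $F(u)G(u)$ with $\pm F(-u)G(u)$ on the complement of the discrete zero set of $G$ extends by analytic continuation to all $u$, yielding $F(u)=\mp F(-u)$ globally. This identification is the only nontrivial technical point; the rest is a pair of direct substitutions into the separability condition.
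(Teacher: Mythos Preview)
Your approach is correct and essentially identical to the paper's: both arguments evaluate the separability identity along the diagonals $u=v$ and $u=-v$ (equivalently, your substitution $(u,v)\to(-u,u)$), impose $V(0)=0$, and use the oddness of $G$ to read off the parity of $F$. Your added remark about cancelling $G(u)$ via analyticity is a technical refinement the paper omits, and your mid-proof sign correction lands on the right identities $V(u)=F(u)G(u)$ and $V(-u)=-F(-u)G(u)$.
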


\begin{theorem}\label{thm:SeparabilityTest}
Let $V(q)$ be analytic at the origin. 
Then $V(q)$ is separable if, for all non-negative integer $m$, it satisfies the following conditions: (1)
\begin{equation}\label{condition2}
V^{(2m+1)}\left(\frac{u}{2}\right)=c_m\,f(u),
\end{equation}
where $V^{(2m+1)}(u/2)$ is the $(2m+1)$-derivative of $V(q)$ evaluated at $u/2$, the $c_m$'s are constants independent of $u$, and $f(u)$ is independent of $m$; and (2) the sum 
\begin{equation}\label{condition3}
 \sum_{m=0}^{\infty} \frac{c_m\,v^{2m+1}}{2^{2m}\,(2m+1)!},
\end{equation}
converges in  some neighborhood of $v=0$. Under these conditions, the divisors of $V(q)$ are given by
\begin{equation}\label{fufu}
F(u)=f(u),
\end{equation}
\begin{equation}\label{gvgv}
G(v)=\sum_{m=0}^{\infty} \frac{c_m\,v^{2m+1}}{2^{2m}\,(2m+1)!} .
\end{equation}
\end{theorem}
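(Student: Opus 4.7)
The plan is to Taylor-expand the difference $\Phi(u,v):=V((u+v)/2)-V((u-v)/2)$ in the variable $v$ about $v=0$ with $u$ treated as a parameter, and then read off the separable product structure from the hypotheses (1) and (2). Since $V$ is analytic at the origin, $\Phi$ is analytic in a bidisk about $(0,0)$, and its single-variable Taylor series in $v$ converges there. A swap $v \to -v$ exchanges the two terms of $\Phi$, so $\Phi(u,-v)=-\Phi(u,v)$; consequently the expansion contains only odd powers of $v$, which is exactly the parity structure encoded in the target formula \eqref{gvgv}.

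To compute the coefficients, I would apply the chain rule iteratively to obtain $\partial_v^k V((u\pm v)/2)=(\pm 1)^k 2^{-k} V^{(k)}((u\pm v)/2)$, so that
\begin{equation}
\left.\frac{\partial^{k}\Phi}{\partial v^{k}}\right|_{v=0}=\frac{V^{(k)}(u/2)}{2^{k}}\bigl[1-(-1)^{k}\bigr],
\end{equation}
which vanishes for even $k$ and equals $V^{(2m+1)}(u/2)/2^{2m}$ for $k=2m+1$. Substituting into the Taylor expansion and dividing by $(2m+1)!$ yields the key identity
\begin{equation}
V\!\left(\frac{u+v}{2}\right)-V\!\left(\frac{u-v}{2}\right)=\sum_{m=0}^{\infty}\frac{V^{(2m+1)}(u/2)\,v^{2m+1}}{2^{2m}\,(2m+1)!},
\end{equation}
valid throughout the bidisk of analyticity of $V$. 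This identity is the structural backbone on which both hypotheses of the theorem act, and it also makes transparent why only odd-order derivatives of $V$ enter, consistent with Corollary \ref{VFParity} and Theorem \ref{thm:FormOfSeparableV}.

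Finally, I would invoke condition \eqref{condition2} to substitute $V^{(2m+1)}(u/2)=c_m f(u)$ uniformly in $m$, factoring $f(u)$ out of every summand to obtain $\Phi(u,v)=f(u)\sum_{m\ge 0} c_m v^{2m+1}/[2^{2m}(2m+1)!]$. Condition \eqref{condition3} then guarantees that the remaining series in $v$ defines a function $G(v)$ on some neighborhood of $v=0$. Setting $F(u)=f(u)$ and letting $G(v)$ be that sum recovers the separability relation of Definition \ref{Def:SeparablePot} and identifies the divisors precisely as in \eqref{fufu}--\eqref{gvgv}. The main point I expect to watch is the matching of convergence domains: the left-hand side converges wherever $V$ is analytic, while the right-hand side requires the analytic function $f(u)$ (well-defined by (1) as the derivatives of an analytic function) together with the convergence of $G(v)$ (given by (2)). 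Their intersection is a nonempty open neighborhood of the origin, which is all that Definition \ref{Def:SeparablePot} demands; no additional induction or delicate rearrangement is needed since the entire argument reduces to a single chain-rule computation inside an absolutely convergent series.
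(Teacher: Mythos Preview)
Your proposal is correct and follows essentially the same approach as the paper: both derive the key identity $V((u+v)/2)-V((u-v)/2)=\sum_{m\ge 0} V^{(2m+1)}(u/2)\,v^{2m+1}/[2^{2m}(2m+1)!]$ and then invoke conditions \eqref{condition2}--\eqref{condition3} to factor it. The only difference is cosmetic: the paper reaches that identity by expanding $V$ as a power series, applying the binomial theorem, and reordering double sums to recognize the $(2m+1)$st derivative, whereas you obtain it in one stroke via the chain rule and the single-variable Taylor expansion in $v$, which is a cleaner path to the same destination.
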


\begin{proof}
Assuming the same analytic potential as in Theorem \ref{thm:FormOfSeparableV} and applying the binomial series expansion results in
\begin{equation}
V\left(\frac{u+v}{2}\right)-V\left(\frac{u-v}{2}\right)=\sum_{l=1}^{\infty}\,\frac{a_l}{2^{l-1}}\,\sum_{m=0}^{[l]} \binom{l}{2m+1}u^{l-2m-1}v^{2m+1},
\end{equation}
where the index $[l]$ is defined as $[l]=(l-1)/2$ for odd $l$ and $[l]=l/2-1$ for even $l$. Next, we separate the even and odd parts of the infinite sum along $l$ and then perform a shift in index from $l$ to $l+1$. We arrive at the following expansion
\begin{equation}\label{tkesat5}
\begin{split}
\left(V\left(\frac{u+v}{2}\right)-V\left(\frac{u-v}{2}\right)\right)&=\sum_{l=0}^{\infty}\sum_{m=0}^{l}\,\frac{a_{2l+1}}{2^{2l}}\, \binom{2l+1}{2m+1}u^{2l-2m}v^{2m+1} \\
&+\sum_{l=0}^{\infty}\sum_{m=0}^{l}\,\frac{a_{2l+2}}{2^{2l+1}}\, \binom{2l+2}{2m+1}u^{2l-2m+1}v^{2m+1}.
\end{split}
\end{equation}

Our next goal is to simplify the above expression by closing at least one infinite series. The absolute convergence of our interaction potentials allows us exchange the order of summations along $l$ and $m$ in accordance to Eq. (\ref{identitysum}) which gives 
\begin{equation}\label{tkesat6}
\begin{split}
\left(V\left(\frac{u+v}{2}\right)-V\left(\frac{u-v}{2}\right)\right)=\sum_{m=0}^{\infty}\frac{v^{2m+1}}{2^{2m}(2m+1)!} \left[(2m+1)!\sum_{l=2m+1}^{\infty}a_l\, \binom{l}{2m+1}\left(\frac{u}{2}\right)^{l-(2m+1)}\right],\\
	\end{split}
\end{equation}

Now, we identify that the factor inside the square brackets is simply the $(2m+1)$-derivative of the original potential $V(q)$ evaluated at $q=u/2$. Hence, we arrive at the following result
\begin{equation}\label{vdiffuv}
V\left(\frac{u+v}{2}\right)-V\left(\frac{u-v}{2}\right)=\sum_{m=0}^{\infty} \frac{v^{2m+1}}{2^{2m}\,(2m+1)!}  \, V^{(2m+1)} \left(\frac{u}{2}\right).  
\end{equation}
Clearly when condition \eqref{condition2} holds, the right hand side of Eq. \eqref{vdiffuv} becomes
\begin{equation}\label{separation}
f(u)\sum_{m=0}^{\infty} \frac{c_m v^{2m+1}}{2^{2m} (2m+1)!}.
\end{equation}
If the infinite series in \eqref{vdiffuv} converges in the neighoborhood of $v=0$, then the expression \eqref{separation} is a product of two univariate functions in $u$ and $v$. Finally, imposing our definition for separability (\ref{separability}) on Eq. (\ref{vdiffuv}) gives Eqs. \eqref{fufu} and \eqref{gvgv} following from \eqref{separation}.
\end{proof}

To illustrate these theorems, let us consider the potential $V(q)=A c^{\kappa q} + B c^{-\kappa q}$ for some constants $A$, $B$, $c$ and $\kappa$. It is straightforward to show that
\begin{equation}\label{seprableex1}
V\left(\frac{u+v}{2}\right)-V\left(\frac{u-v}{2}\right)=\big(A c^{\kappa u/2} - B c^{-\kappa u/2}\big)\left(2\sinh\FracArgument{\kappa \, \text{ln} c \, v}{2} \right)=F(u)\,G(v),
\end{equation}
so that $V(q)$ is indeed separable. Let us check if $V(q)$ satisfies our theorems. In $(u,v)$ coordinates, the original interaction potential can be rewritten as
\begin{equation}
\begin{split}
V\left(\frac{u+v}{2}\right)&=\frac{1}{2}\left(A c^{\kappa u/2} - B c^{-\kappa u/2}\right)\left(2\sinh\FracArgument{\kappa \, \text{ln} c \, v}{2} \right) +\left[\left(A c^{\kappa u/2} +  B c^{-\kappa u/2}\right) \cosh\FracArgument{\kappa \, \text{ln} c \, v}{2}\right] \\
&=\frac{1}{2}f(u) g(v) + h(u,v).
\end{split}
\end{equation}
Clearly, $f(u)=F(u)$ and $g(v)=G(v)$. Hence, Theorem (\ref{thm:FormOfSeparableV}) is satisfied. Since the sinh function is odd, $G(v)$ is clearly an odd function of $v$. In addition, taking the $(2m+1)$-derivative of $V(q)$, we arrive at
\begin{equation}
V^{(2m+1)}\left(\frac{u}{2}\right)=\left(\frac{\kappa}{2}\right)^{2m+1} \,\left(A c^{\kappa u/2} - B c^{-\kappa u/2}\right),
\end{equation}
which involves only the original divisor $F(u)$ times the constant $c_m = (\kappa/2)^{2m+1}$. Therefore, Theorem (\ref{thm:SeparabilityTest}) is satisfied. Finally, substitution of the above result into Eqs. (\ref{fufu}) and (\ref{gvgv}), we recover the specific separable condition of the form given by Eq. (\ref{seprableex1}). 

We also list some interaction potentials $V(q)$ in Table (\ref{tablepot}) satisfying our separability condition. One can show that all given potentials satisfy the theorems presented in the current section. 

\begin{table}
\centering
\begin{tabularx}\textwidth{L|M M M}
    \toprule
        
    V(q) & F(u) & G(v) & h(u,v)    \\
        
    \midrule
        
    V_0 q  &  1  &  V_0 v  &  V_0 \frac{u}{2}   \\
        
    V_0 q^2  &   u   &   V_0 v   &   \frac{V_0}{4}(u^2 + v^2)   \\
        
    V_0 \e^{\kappa q}  &  \e^{\kappa u/2}  &  2V_0 \sinh\FracArgument{\kappa v}{2}  &  0   \\
        
    V_0 c^{\kappa q}  &  c^{\kappa u/2}  &  2V_0 \sinh\FracArgument{\kappa \, \text{ln} c \, v}{2}  &  0     \\
        
    A \e^{\kappa q} + B \e^{-\kappa q}  &  A \e^{\kappa u/2} - B \e^{-\kappa u/2}  &  2 \sinh\FracArgument{\kappa v}{2}  &  \left(A \e^{\kappa u/2} +  B \e^{-\kappa u/2}\right) \cosh\FracArgument{\kappa v}{2}   \\
        
    A c^{\kappa q} + B c^{-\kappa q}  &  A c^{\kappa u/2} - B c^{-\kappa u/2}  &  2\sinh\FracArgument{\kappa \, \text{ln} c \, v}{2}  &  \left(A c^{\kappa u/2} + B c^{-\kappa u/2}\right) \cosh\FracArgument{\kappa \, \text{ln} c \, v}{2}   \\
        
    (A \e^{\kappa q} + B \e^{-\kappa q})^2  &  A^2 \e^{\kappa u} - B^2 \e^{-\kappa u}  &  2\sinh(\kappa v)  &  (A^2 \e^{\kappa u} + B^2 \e^{-\kappa u}) \cosh(\kappa v) + 2AB   \\
        
    (A c^{\kappa q} + B c^{-\kappa q})^2  &  A^2 c^{\kappa u} - B^2 c^{-\kappa u}  &  2\sinh(\kappa \, \text{ln} c \, v)  &  (A^2 c^{\kappa u} + B^2 c^{-\kappa u}) \cosh(\kappa \, \text{ln} c \, v) + 2AB    \\
        
    V_0 \sin(\kappa q)  &  \cos\FracArgument{\kappa u}{2}  &  2V_0 \sin\FracArgument{\kappa v}{2}  &  V_0 \sin\FracArgument{\kappa u}{2} \cos\FracArgument{\kappa v}{2}   \\
        
    V_0 \cos(\kappa q)  &  \sin\FracArgument{\kappa u}{2}  &  -2V_0 \sin\FracArgument{\kappa v}{2}  &  V_0 \cos\FracArgument{\kappa u}{2} \cos\FracArgument{\kappa v}{2}   \\ 
        
\bottomrule
\end{tabularx}
\caption{Table of interaction potentials in the configuration space and their corresponding separable forms in $u$-$v$ coordinates. The factors $V_0, \kappa, A, B,$ and $c$ are some arbitrary constants. }\label{tablepot}
\end{table}

\section{Dynamics of the conjugacy-preserving TOA operator}\label{sec:dynamics}

We now proceed to our second objective and establish formally the legitimacy of the supraquantized TOA operator as a genuine TOA operator. In particular, we simulate the time evolution of our operator's eigenfunctions $\varphi_\tau(q)$ and show that its eigenfunctions unitarily arrive at the arrival point at a time equal to its eigenvalue $\tau$. This happens when both the position expectation value equals the arrival point and the position uncertainty is at its minimum, all happening simultaneously at the corresponding eigenvalue time $\tau$ \cite{Galapon2018}.

Our simulation proceeds as follows. First, the TOA eigenvalue problem 
\begin{equation}\label{eigenvalprob}
(\hat{\mathrm{T}}_S\varphi_\tau)(q)=\tau \varphi_\tau(q)=\int_{-\infty}^{\infty}dq' \, \langle q|\hat{\mathrm{T}}_S|q'\rangle \,\varphi_\tau(q'),
\end{equation}
is solved by means of coarse graining. This is done by confining the operator $\hat{\mathrm{T}}_S$ in some closed interval in the real line $[-l,l]$ for some confining length $l>0$, and consequently, approximating the original eigenvalue problem \eqref{eigenvalprob} by replacing the bounds of integrations from $(-\infty,\infty)$ to $(-l,l)$. In essence, we are approximating our original TOA operator $\hat{\mathrm{T}}_S$ as a bounded and self-adjoint Fredholm integral operator. We then discern its behavior for arbitrary large $l$ by successively increasing $l$. The coarse graining can be seen as a deformation of the supraquantized operator in the same configuration space, which is not a significant departure from the original arrival time problem \cite{Galapon2018}. This indicates that observations from the confined case (finite $l$) provide us with similar insights as the unconfined case ($l\to\infty$).

The corresponding eigenvalue problem is then solved by Gauss-Legendre quadrature method. After which, the constructed eigenfunctions are numerically evolved in accordance to the time-dependent Schr\"{o}dinger equation using the operator split method. All parameters and figures that will follow are in atomic units with the parameters $\mu=\hbar=1$. 

\subsection{Unitary arrival property of the supraquantized TOA operator}

Let us consider the potential given by $V(q) = V_0 \cos(kq)$, for some real parameters $V_0$ and $k$, which is typically used in describing optical lattice structures in solid state physics \cite{RevModPhys.78.179}. The supraquantized TOA operator formally assumes the form $\hat{\mathrm{T}}_S=(\mu/i \hbar)\,\int_{-\infty}^{\infty} \,dq' \,T_S(q,q') \,\mathrm{sgn}(q-q')\,\varphi(q')$ with the specific time kernel factor $T_S(q,q')$ given by

\begin{equation}\label{tsupracosine}
T_S(u,v) = \frac{u}{4} - \frac{\mu V_0}{4\hbar^2} \int_{0}^{v} dv' \sin \left( \frac{k v'}{2} \right) \int_{0}^{u} du' u' \sin \left( \frac{k u'}{2} \right) {}_0F_1 \left( ; 1 ; \left(\frac{\mu}{2\hbar^2}\right) \Tilde{G}(v,v') \Tilde{F}(u,u') \right),
\end{equation}
Here, $u=q+q',v=q-q'$ and the factors $\Tilde{G}(v,v')$ and $ \Tilde{F}(u,u')$ are defined as 
\begin{equation}
	\Tilde{G}(v,v') = - \frac{4 V_0}{k} \left[ \cos\left(\frac{k v}{2}\right) - \cos\left(\frac{k v'}{2}\right) \right]; \quad \text{and} \quad \Tilde{F}(u,u') = \frac{2}{k} \left[ \cos\left(\frac{k u}{2}\right) - \cos\left(\frac{k u'}{2}\right) \right].
\end{equation}

The coarse-grained version of the operator $\hat{\mathrm{T}}_S$ has eigenfunctions $\varphi_\tau$ that are square-integrable and eigenvalues $\tau$ that have discrete spectrum. Since the potential is even, the constructed eigenfunctions have definite parities. See also Appendix (\ref{subsec:cptoaparity}) for the discussion of the supraquantized TOA operator under parity transformation. We observe that the eigenfunctions share the same general distinction with that of the free-particle TOA operator and quantized TOA operators for even potentials found in Refs. \cite{Galapon2004a,Galapon2005a,Galapon2018}: they can be either nodal or non-nodal. Nodal eigenfunctions vanish at the arrival point, while non-nodal eigenfunctions do not. 

Figures  (\ref{subfig:nodsupl1t1side}) and (\ref{subfig:nodsupl1t1top}) describe the representative dynamics of the nodal eigenfunctions of the supraquantized operator for the confining length $l=1$ and $V_0=k=1$. As can be seen, the evolved probability density $|\varphi_\tau(q,\tau)|^2$ corresponding to the nodal eigenfunctions have the characteristic dynamical property that two peaks coalesce at the arrival point $q=0$ with their closest approach to each other occurring at the eigenvalue $\tau=0.01$, and then the peaks disperse. In addition, notice that the nodal eigenfunctions are zero at the arrival point at all times. Since the probability density $|\varphi_\tau(q,\tau)|^2$ vanishes at the arrival point, as shown by the blue region indicating the minimum value in Figure (\ref{subfig:nodsupl1t1top}), we interpret the dynamical behavior of the nodal eigenfunctions as the particle arrival with no particle appearance or detection. This intriguing phenomenon coincides with the core result of Ref. \cite{Sombillo2016} which broadens our understanding of quantum arrival. It is possible for a quantum particle to arrive at the arrival point but with no particle appearance. That is, quantum arrival is not synonymous to particle appearance at the arrival point. It may be defined instead as the event when the position uncertainty is at its minimum at the arrival point. Though it may appear counterintuitive, the dynamics of nodal eigenfunctions bear resemblance to the well-known double-slit experiment. If one has no prior knowledge of the particle's path, the viewing screen at the transmission channel exhibits a wave-like nature. In this sense, we can confirm that the particle reached the arrival point, yet particle appearance is no longer guaranteed.

On the other hand, Figures  (\ref{subfig:nondsupl1t1side}) and (\ref{subfig:nondsupl1t1top}) shows the representative dynamics of the non-nodal eigenfunctions of the supraquantized operator for the same length $l=1$. As shown, the corresponding probability density $|\varphi_\tau(q,\tau)|^2$ has the characteristic dynamical property that a single peak forms at the arrival point $q=0$ with its minimum width occurring at the time eigenvalue $\tau=0.01$, after which the peak disperses. Contrary to the nodal eigenfunctions, the probability density is maximum at the origin, as indicated by the yellow region representing the maximum value in Figure (\ref{subfig:nondsupl1t1top}). Thus, the dynamical behavior of the non-nodal eigenfunctions correspond to particle arrival with appearance or detection. If one is only interested with arrival time measurements with particle detection, which is what we observe in real experiments, then the relevant eigenstates that we should consider are the non-nodal eigenfunctions of the supraquantized TOA operator. 

Now, the same unitary arrival property similar to Figure (\ref{fig:nodal_l1_t1_supra}) is also observed for different values of the confining lengths $l$ (See subfigures $(a)$ and $(b)$ in Figures (\ref{fig:weylvssupr_data3_l3_tau2})--(\ref{fig:weylvssupr_data2_10_tau1_noonn})). In our simulations, we have noticed that the probability density becomes more localized for increasing $l$ and fixed $\tau$, $V_0$ and $k$. Hence, we infer that as $l$ approaches infinity, the probability density becomes a function with a singular support, peaking at the arrival point.

These findings fulfill our second objective, demonstrating that the eigenfunctions of the complete supraquantized operators possess the desired unitary arrival property at the origin at their respective eigenvalues within numerical accuracy.


\begin{figure}
\centering
\begin{subfigure}[b]{.32\linewidth}
\includegraphics[width=1\textwidth]{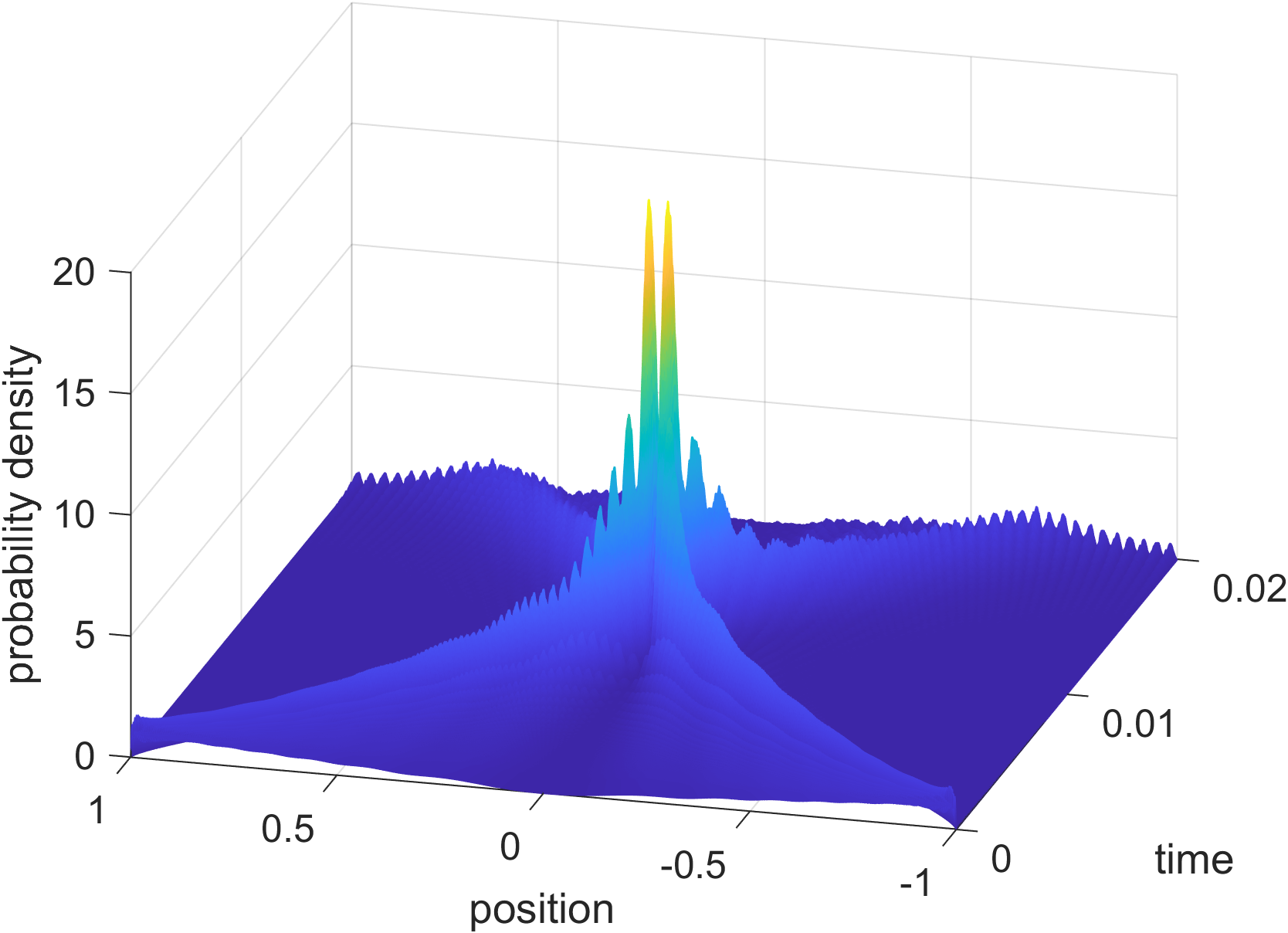}
\caption{Side view: Nodal eigenfunction }\label{subfig:nodsupl1t1side}
\end{subfigure}
\begin{subfigure}[b]{.32\linewidth}
\includegraphics[width=1\textwidth]{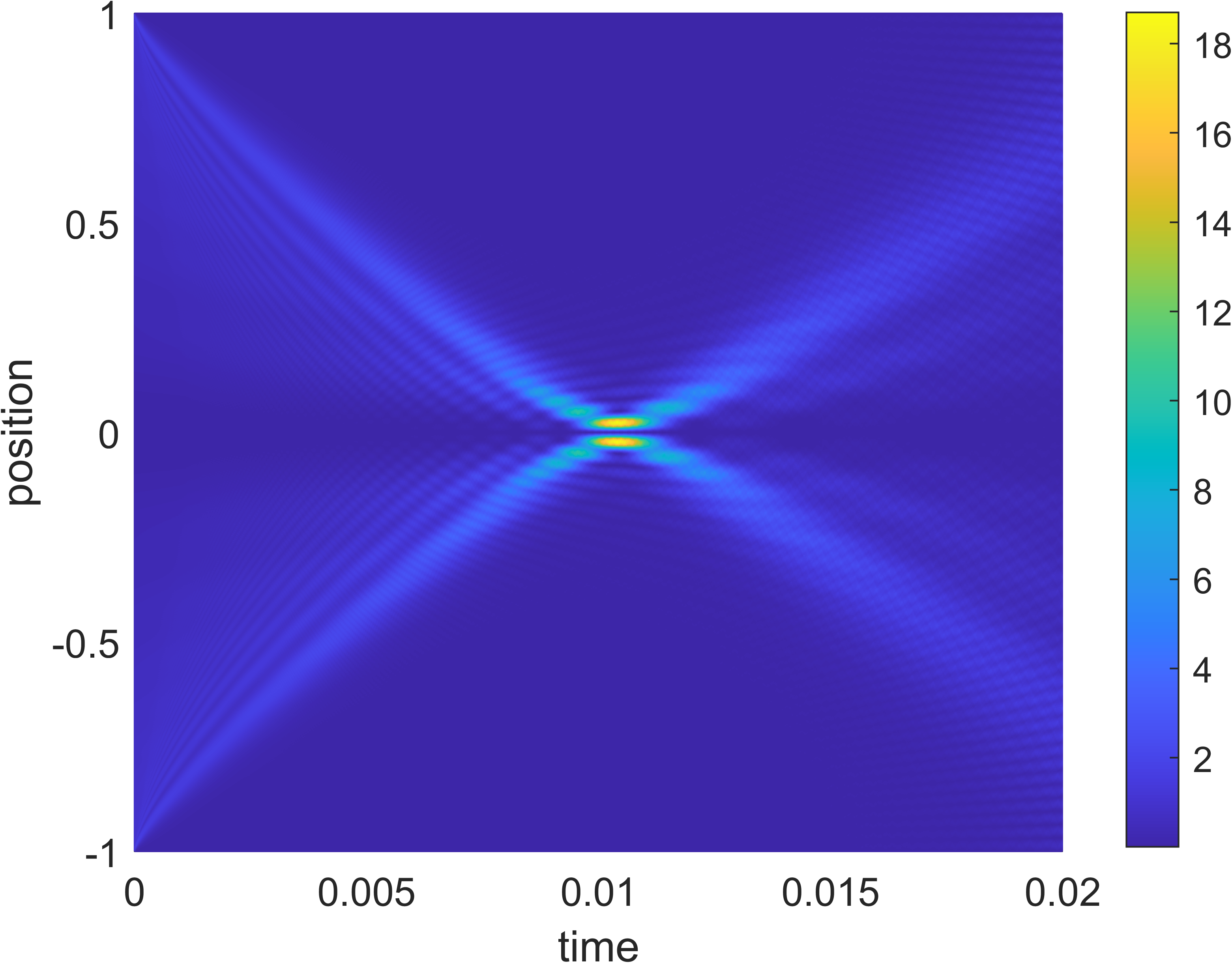}
\caption{Top view: Nodal eigenfunction } \label{subfig:nodsupl1t1top}
\end{subfigure}

\begin{subfigure}[b]{.32\linewidth}
\includegraphics[width=1\textwidth]{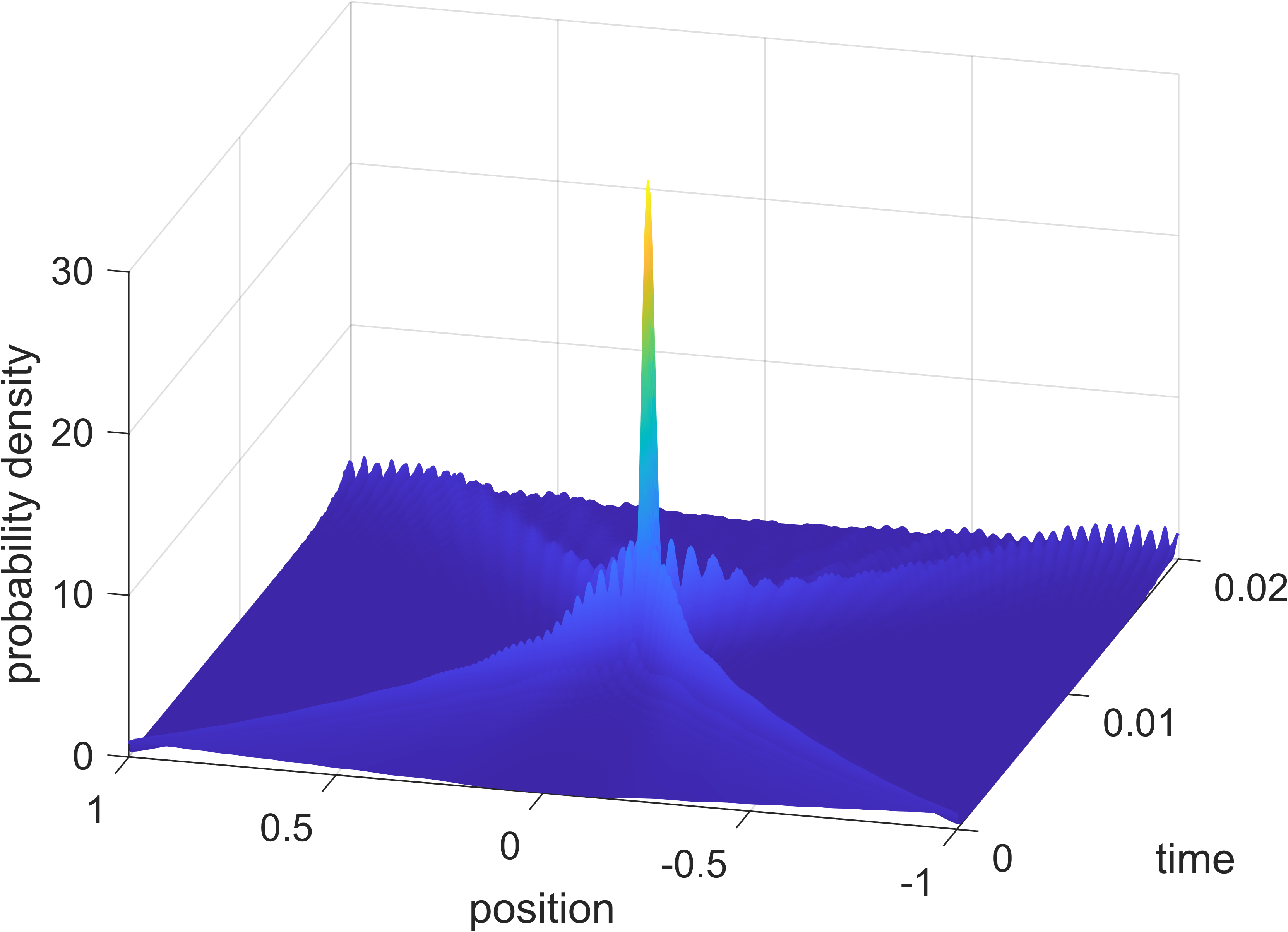}
\caption{Side view: Non-nodal eigenfunction } \label{subfig:nondsupl1t1side}
\end{subfigure}
\begin{subfigure}[b]{.32\linewidth}
\includegraphics[width=1\textwidth]{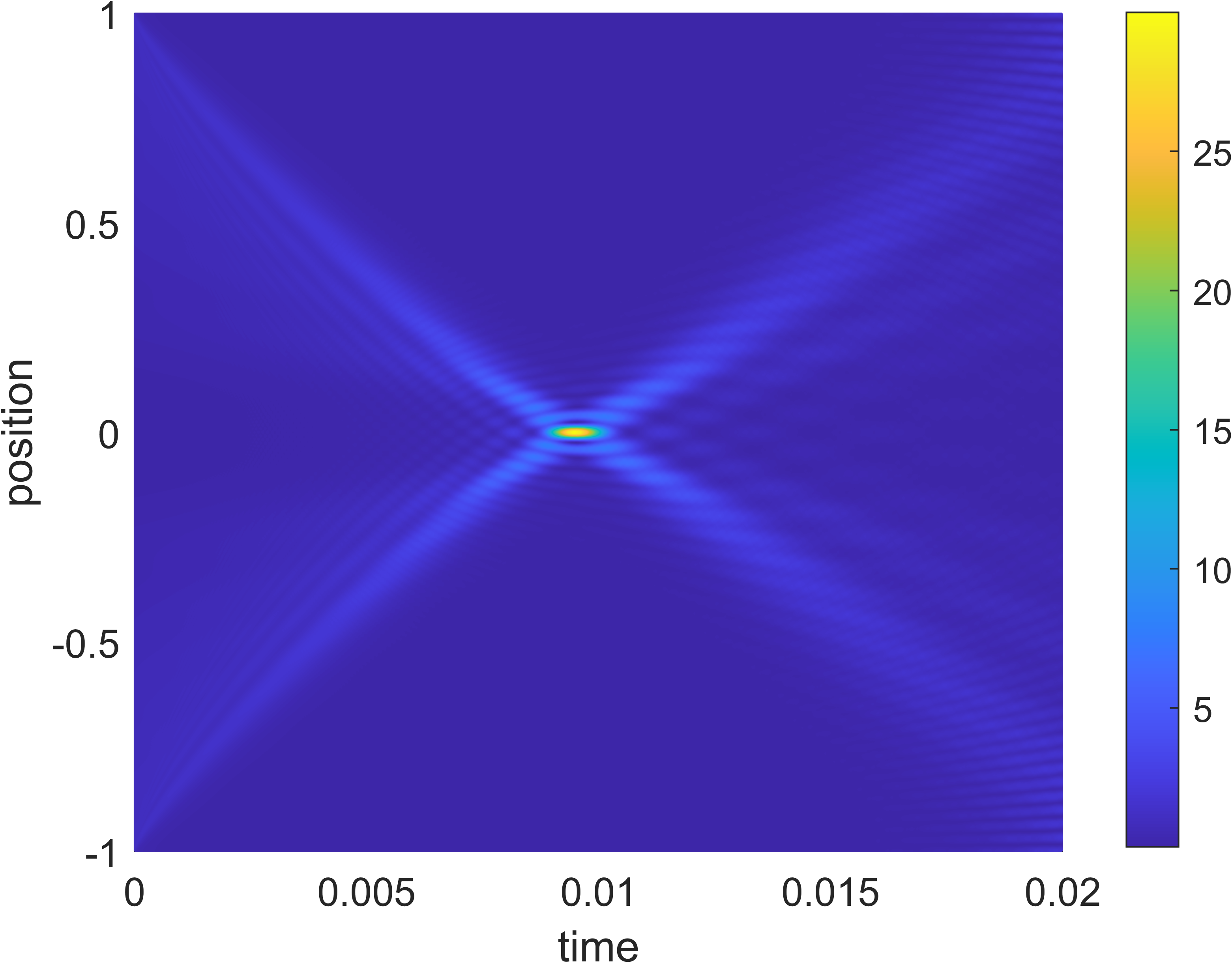}
\caption{Top view: Non-nodal eigenfunction } \label{subfig:nondsupl1t1top}
\end{subfigure}
\caption{The time evolution of the nodal and non-nodal eigenfunctions of the supraquantized TOA operator for the cosine potential with confining length $l=1$ and eigenvalue $\tau=0.01$. (a)--(b) The nodal eigenfunctions show two peaks that coalesce at the arrival point $q=0$ with their closest approach to each other occurring at the eigenvalue $\tau=0.01$. (c)--(d) The non-nodal eigenfunctions unitarily collapse at the arrival point $q=0$ at a time equal to the eigenvalue $\tau$.}\label{fig:nodal_l1_t1_supra}
\end{figure}

\begin{figure}
\centering
\begin{subfigure}[b]{.32\linewidth}
\includegraphics[width=1\textwidth]{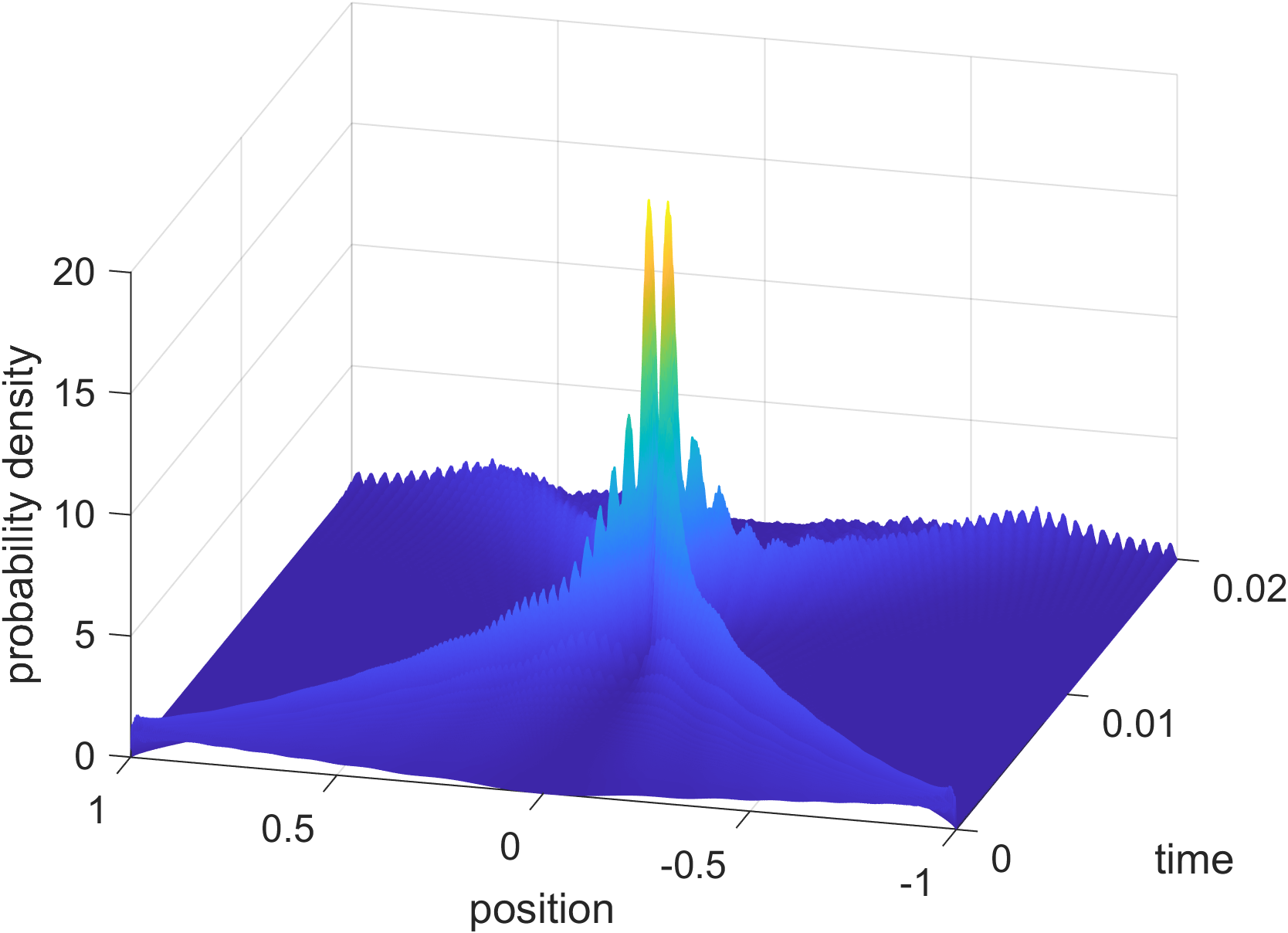}
\caption{Side view: Nodal eigenfunction } \label{subfig:noddweyll1t1side}
\end{subfigure}
\begin{subfigure}[b]{.32\linewidth}
\includegraphics[width=1\textwidth]{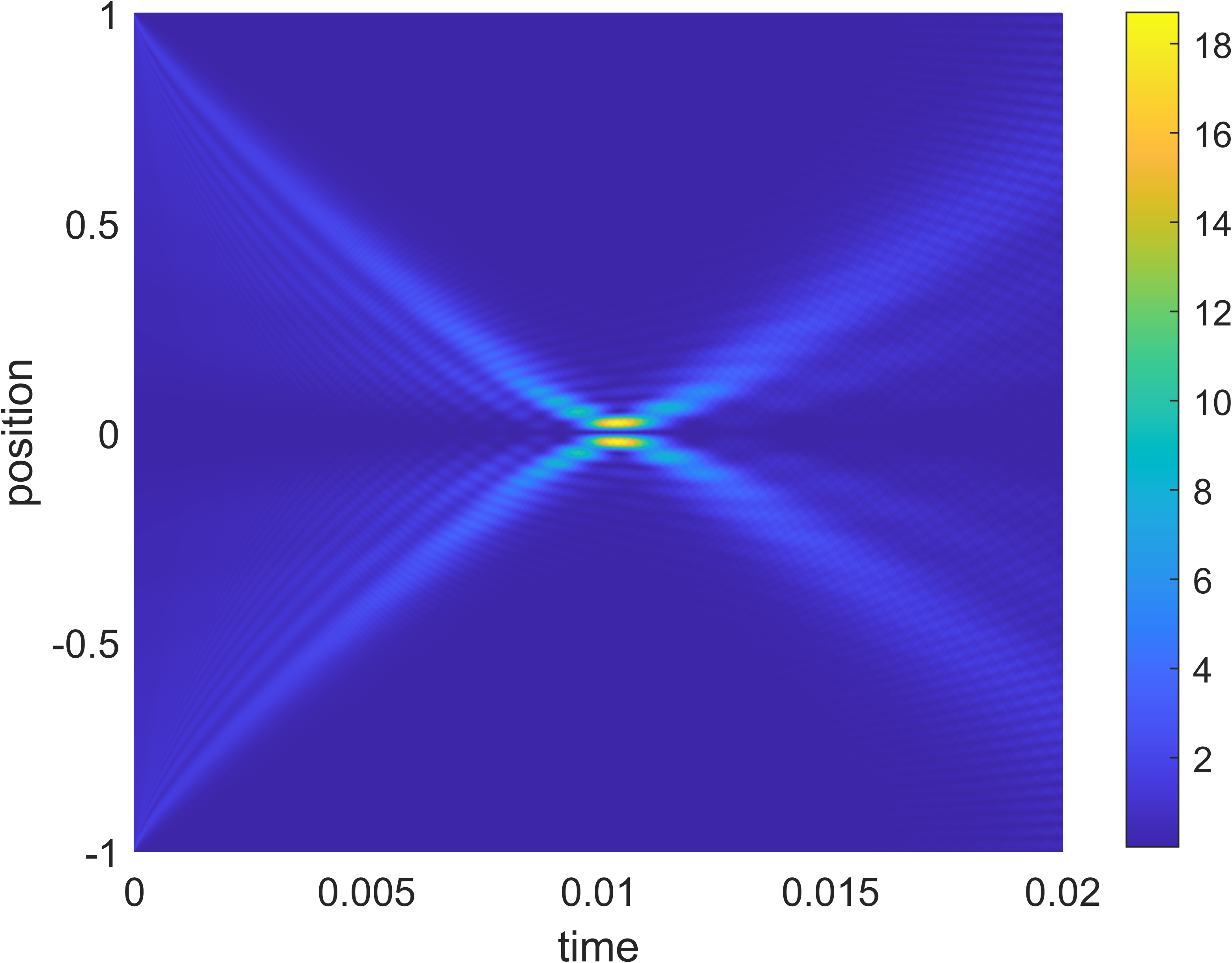}
\caption{Top view: Nodal eigenfunction } \label{subfig:nodweyll1t1top}
\end{subfigure}
	
\begin{subfigure}[b]{.32\linewidth}
\includegraphics[width=1\textwidth]{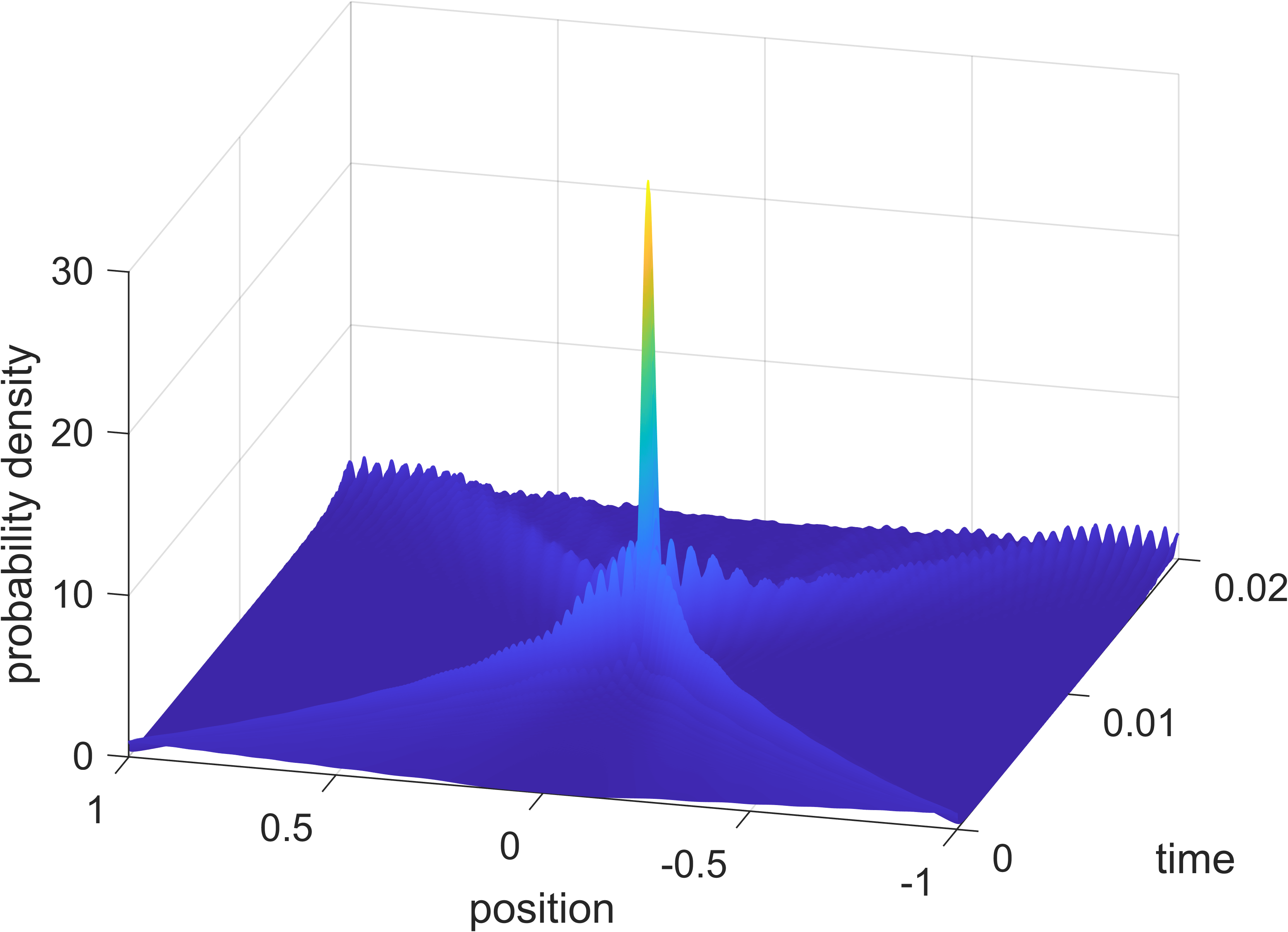}
\caption{Side view: Non-nodal eigenfunction } \label{subfig:nondweyll1t1side}
\end{subfigure}
\begin{subfigure}[b]{.32\linewidth}
\includegraphics[width=1\textwidth]{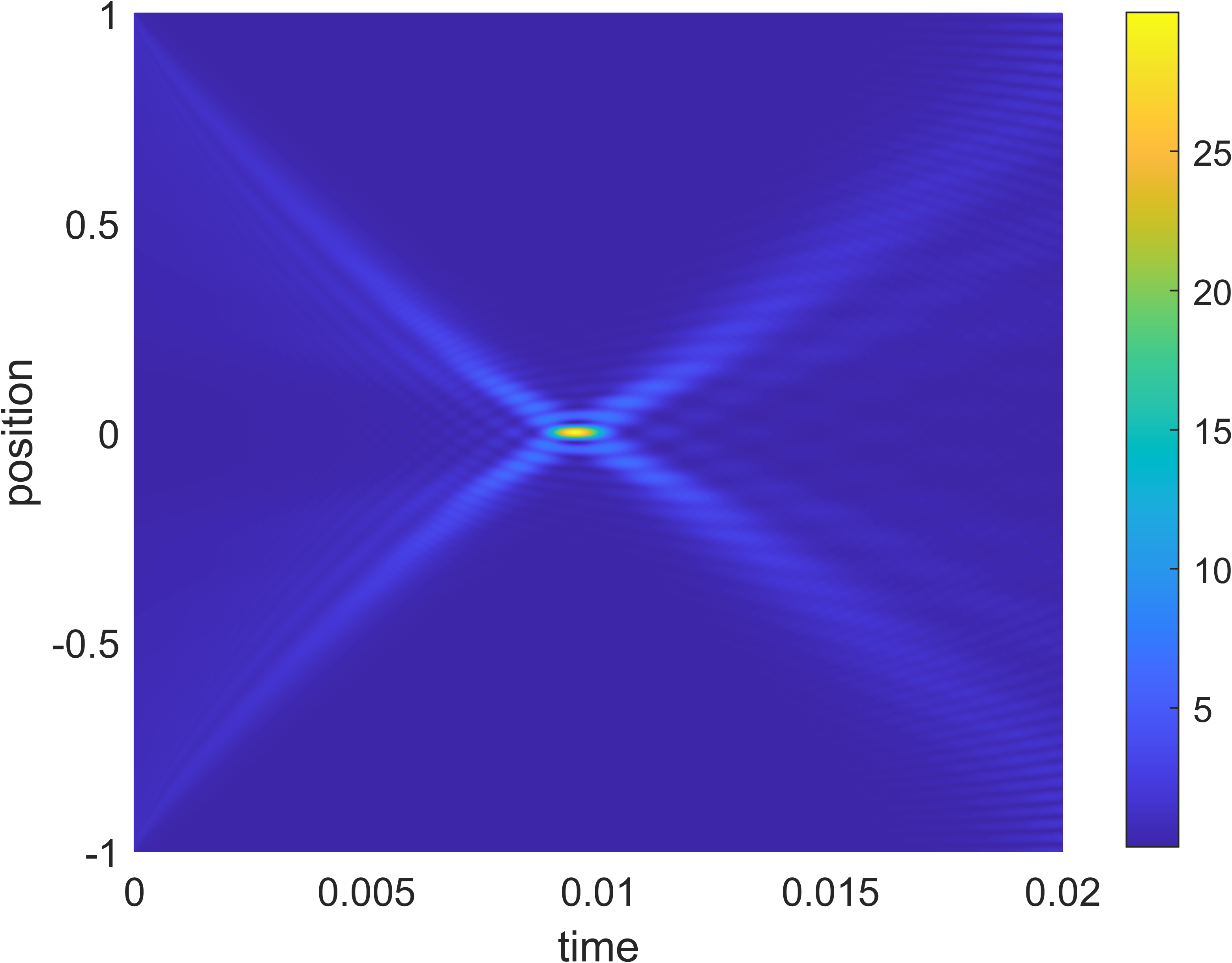}
\caption{Top view: Non-nodal eigenfunction } \label{subfig:nondweyll1t1top}
\end{subfigure}
	
\caption{The time evolution of the nodal and non-nodal eigenfunctions of the Weyl-quantized TOA operator for the cosine potential with confining length $l=1$ and eigenvalue $\tau=0.01$, respectively. The eigenfunctions also unitarily collapse at the arrival point $q=0$ at a time equal to the eigenvalue $\tau$, similar to the supraquantized operator.}\label{fig:weyll1tau1}
\end{figure}

\subsection{Comparison between supraquantized and Weyl-quantized operators for a cosine potential}

Let us now explore the difference between quantized and supraquantized TOA operators with respect to the dynamics of their eigenfunctions. For the cosine potential, the Weyl-quantized TOA operator assumes the form $\hat{\mathrm{T}}_W=(\mu/i \hbar)\,\int_{-\infty}^{\infty} \,dq' \,T_W(q,q') \,\mathrm{sgn}(q-q')\,\varphi(q')$, where the corresponding time kernel factor is given by
\begin{equation}\label{tweylcosine}
T_W(u,v)=\frac{1}{2}\int_{0}^{u} ds \,\, {}_0F_{1}\left(;1;\left(\frac{\mu\,V_0}{2\hbar^2}\right)v^2\left[\cos\left(\frac{ku}{2}\right)-\cos\left(\frac{kv}{2}\right)\right]\right),
\end{equation}
with $u=q+q'$ and $v=q-q'$. Since the potential being considered is nonlinear, the kernel factor $T_W(u,v)$ does not satisfy the TKE. Hence, the operator $\hat{\mathrm{T}}_W$ is not conjugate with the system Hamiltonian. Nonetheless, Figure (\ref{fig:weyll1tau1}) confirms that the Weyl-quantized operator also possesses the unitary arrival property for $V_0=k=l=1$. Within numerical accuracy, we do not see significant difference between the quantized and supraquantized operators. Note, however, that this does not mean the independence of the operator's dynamics on the time-energy canonical commutation relation.

In fact, we are able to distinguish the two operators when we increase the values of $l$, $V_0$ and $k$. For instance, Figure (\ref{fig:weylvssupr_data3_l3_tau2}) shows the time evolution of the nodal eigenfunctions of the Weyl-quantized and supraquantized TOA operators for the parameters $V_0=5, k=5$,  $l=3$ and $\tau=0.1$. Both operators still show nodal characteristics with two peaks merging at the arrival point $q=0$ at the eigenvalue time $\tau=0.1$. However, it is apparent that the the supraquantized TOA operator exhibits more desirable nodal dynamics compared to the Weyl-quantized one. Specifically, the supraquantized nodal eigenfunctions demonstrate sharper dynamics with less noise and taller peaks compared to the Weyl-quantized operator. 

On the other hand, Figure (\ref{fig:weylvssupr_data3_l3_tau2_noonn}) demonstrates the representative dynamics of the non-nodal eigenfunctions of the Weyl-quantized and supraquantized operators. We also observe a noisier and less sharp dynamics on the Weyl-quantized operator. In particular, the non-nodal eigenfunctions of the supraquantized operator has the smoothest and  sharpest probability density at the arrival point. Thus the supraquantized TOA operator demonstrate the most desirable non-nodal characteristics.

In Figures (\ref{fig:weylvssupr_data2_10_tau1_nod}) and (\ref{fig:weylvssupr_data2_10_tau1_noonn}), we set $V_0=5, k=1$,  $l=10$ and $\tau=0.1$. Notice that the eigenfunctions of the Weyl-quantized TOA operator already loses its nodal and non-nodal characteristics so that its unitary arrival property is no longer discernible. The supraquantized operator, on the other hand, still posses the desired nodal and non-nodal characteristics. 

One might wonder why there are no significant differences between the two operators when using the parameters (i) $V_0=k=l=1$, but there are noticeable distinctions when using the parameters (ii) $V_0=5, k=5, l=3$ and (iii) $V_0=5, k=1, l=10$. We can explain these behaviors using the expanded iterative solution of the TKE (\ref{tkesol3compact}). Recall that the supraquantized TOA operator for the cosine potential can be written as the expansion $\hat{\mathrm{T}}_S=\hat{\mathrm{T}}_W+\hat{\mathrm{T}}_1+\hat{\mathrm{T}}_2+...,$ where the leading term is the Weyl-quantized operator (\ref{tweylcosine}) and the succeeding terms are the quantum corrections which are already incorporated in the exact closed-form solution. 

In the case where $V_0=k=l=1$, both operators behave similarly because their time kernel factors align for small parameter values. This implies that the quantum corrections  $\hat{\mathrm{T}}_n$ are negligible. In this way, even though the Weyl-quantized operator does not precisely satisfy the time-energy canonical commutation relation, it still serves as an approximate solution of the said conjugacy relation. Hence, it exhibits similar dynamics with that of the conjugacy-preserving TOA operator. However, when $V_0=5, k=5, l=3$, the quantum corrections become significant so that the Weyl-quantized operator alone is no longer a sufficient approximation for the supraquantized one. In the third case where $V_0=5, k=1, l=10$, we see that the representative dynamics of the Weyl-quantized operator is completely different with that of the supraquantized operator and far from the ideal dynamics. This can be explained in two ways. First, the quantum corrections already dominate and dictate the general behavior of the conjugacy-preserving operator, so that without them, no arrival property is observed. Another possible explanation is that the numerical algorithm used is no longer suitable for the Weyl-quantized TOA operator so that the corresponding dynamics already diverge. Whatever the case may be, both scenarios indicate that the supraquantized TOA operator comes closest to the ideal unitary arrival property among arrival time observables within our numerical algorithm. 

These findings shed light on a crucial aspect of the quantum corrections $\hat{\mathrm{T}}_n$, not discussed in Ref. \cite{Pablico2023}. The noisy feature seen in the dynamics of the Weyl-quantized operator result from the absence of the quantum corrections. That is, the $\hat{\mathrm{T}}_n$ operators ensures smooth nodal and non-nodal characteristics of arrival time operators. Put simply, they eliminate the noise that shows up in the behavior of the Weyl-quantized operator. Consequently, the supraquantized operator predicts with greater certainty the arrival and non-arrival of the particle at the arrival point.

Our result is also significant because it helps us understand the role of the time-energy canonical commutation relation in the observed dynamics of time operators. The conjugacy relation with the system Hamiltonian ensures that the TOA operator evolves in step with parametric time, leading to precise and localized arrival of eigenfunctions at the designated arrival point at the corresponding eigenvalue time. Conversely, non-conjugacy with the Hamiltonian can lead to two scenarios. Firstly, a non-conjugacy-preserving TOA operator which is also a sufficient approximation to the supraquantized operator, still possess the unitary arrival properties but lack the sharpness and precision of the latter operator. Secondly, a non-conjugacy-preserving TOA operator, which is also a poor approximation to the supraquantized operator, results in the eigenfunction's arrival not coinciding with its eigenvalue. The latter scenario has also been observed by two of us in Ref. \cite{Galapon2018} where nonunitary collapse occurs for deformed-quantized operators constructed by multiplying some deformation factor to the Weyl-quantized operator. 

Now, in all examples examined, both the supraquantized and Weyl-quantized operators demonstrate the unitary arrival property. However, this property is more consistently observed, within numerical accuracy, with respect to the nodal and non-nodal eigenfunctions of the supraquantized TOA operator for varying values of $l$, $V_0$ and $k$, whether small or large. Therefore, we conclude that the supraquantized (conjugacy-preserving) TOA operator exhibits more desirable dynamics compared to the Weyl-quantized (non-conjugacy preserving) TOA operator. This also implies that the time-energy canonical commutation relation plays a crucial role in shaping the dynamics of time operators. Consequently, the quantum corrections appearing in the expanded iterative solution of the TKE are necessary to achieve sharper arrivals at the designated arrival point. The same results have been observed for a sinusoidal potential defined by $V(q)=V_0\,\mathrm{sin}(kq).$ Hence, our third and final objective is fulfilled. 


\begin{figure}
\centering
\begin{subfigure}[b]{.32\linewidth}
\includegraphics[width=1\textwidth]{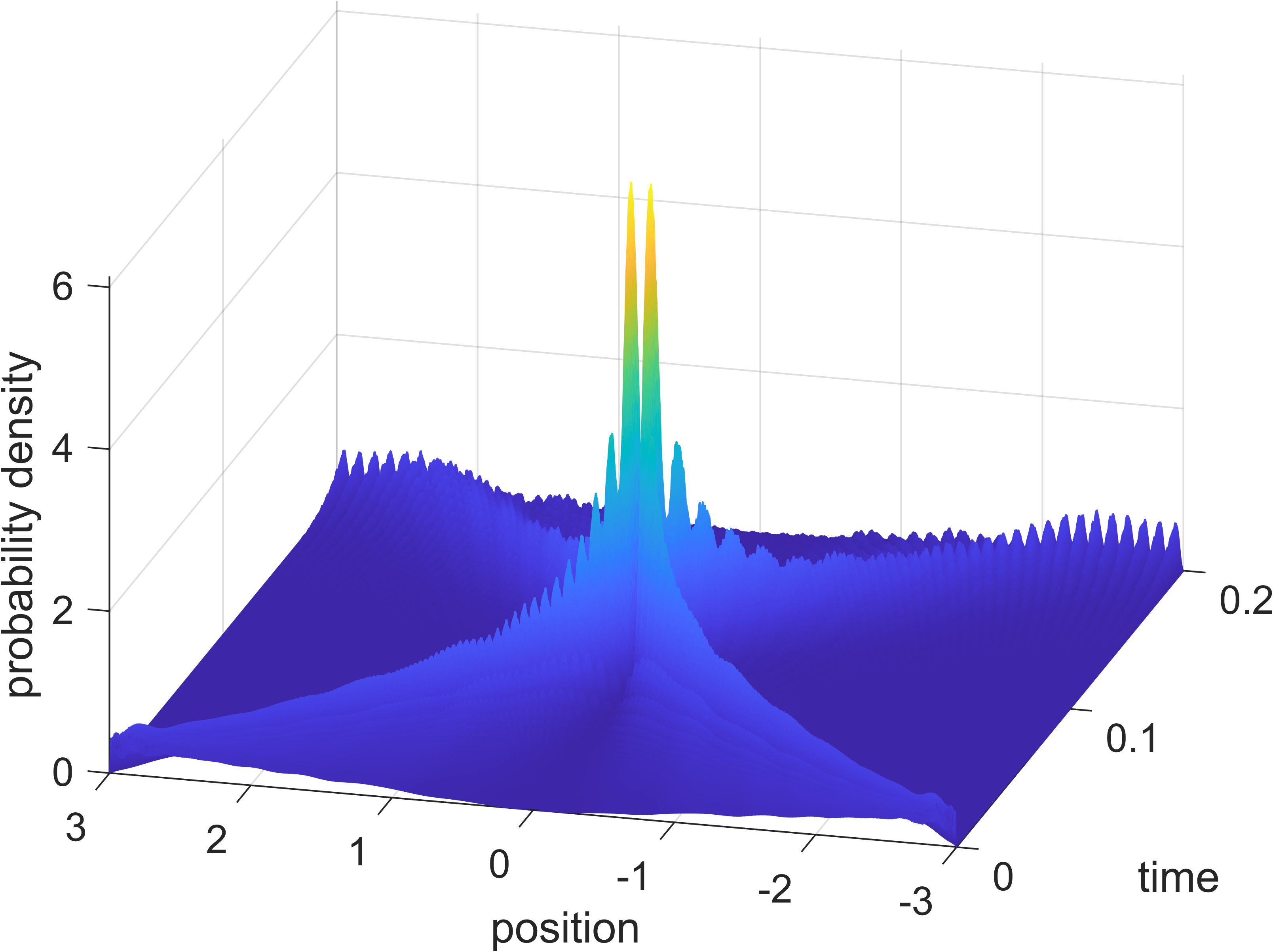}
\caption{Side view: Supraquantized nodal eigenfunction} \label{subfig:supra_data3_l3_tau2_side}
\hspace{20mm}
\end{subfigure}
\begin{subfigure}[b]{.32\linewidth}
\includegraphics[width=1\textwidth]{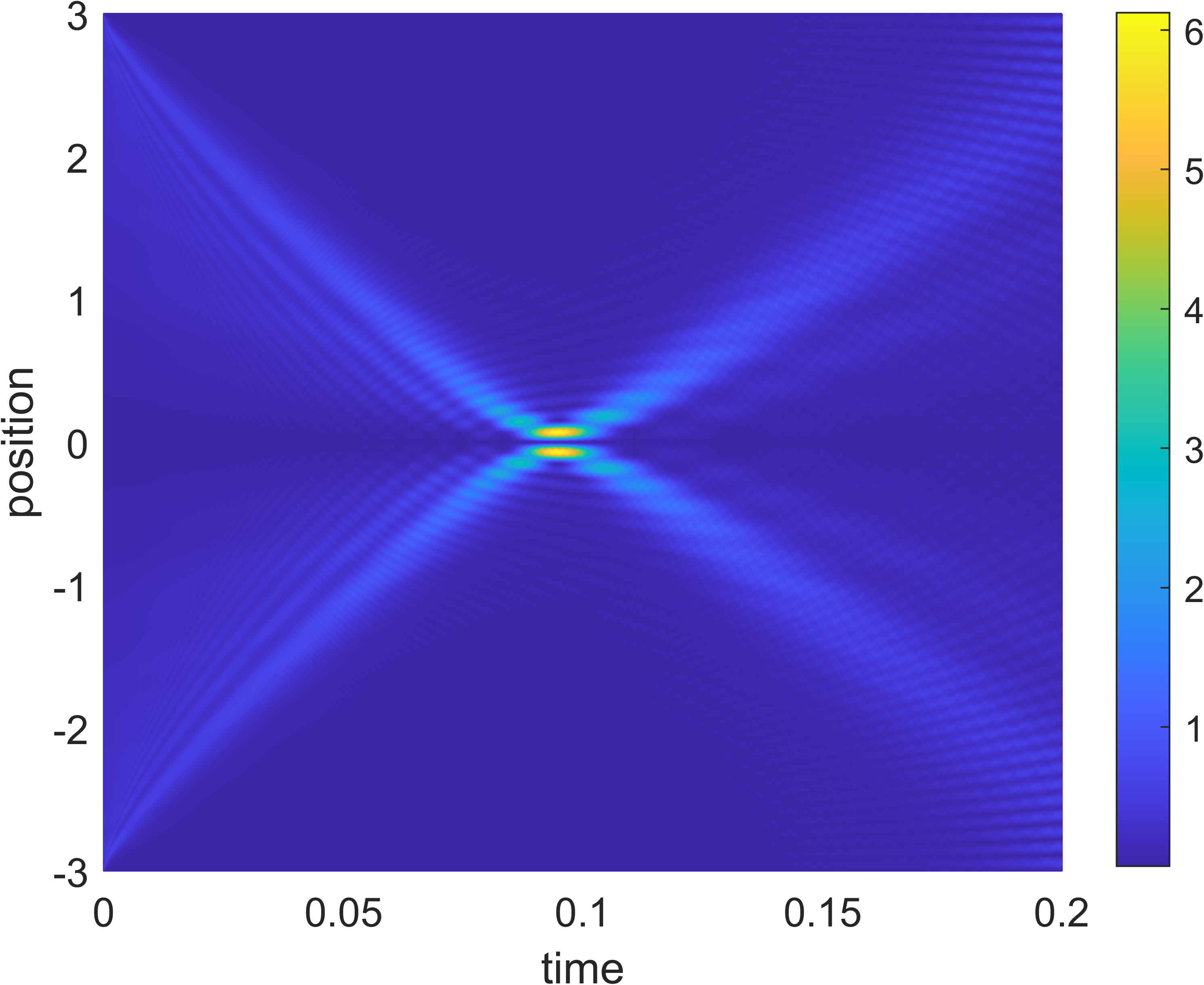}
\caption{Top view: Supraquantized nodal eigenfunction} \label{subfig:supra_data3_l3_tau2_top}
\hspace*{20mm}
\end{subfigure}
	
\begin{subfigure}[b]{.32\linewidth}
\includegraphics[width=1\textwidth]{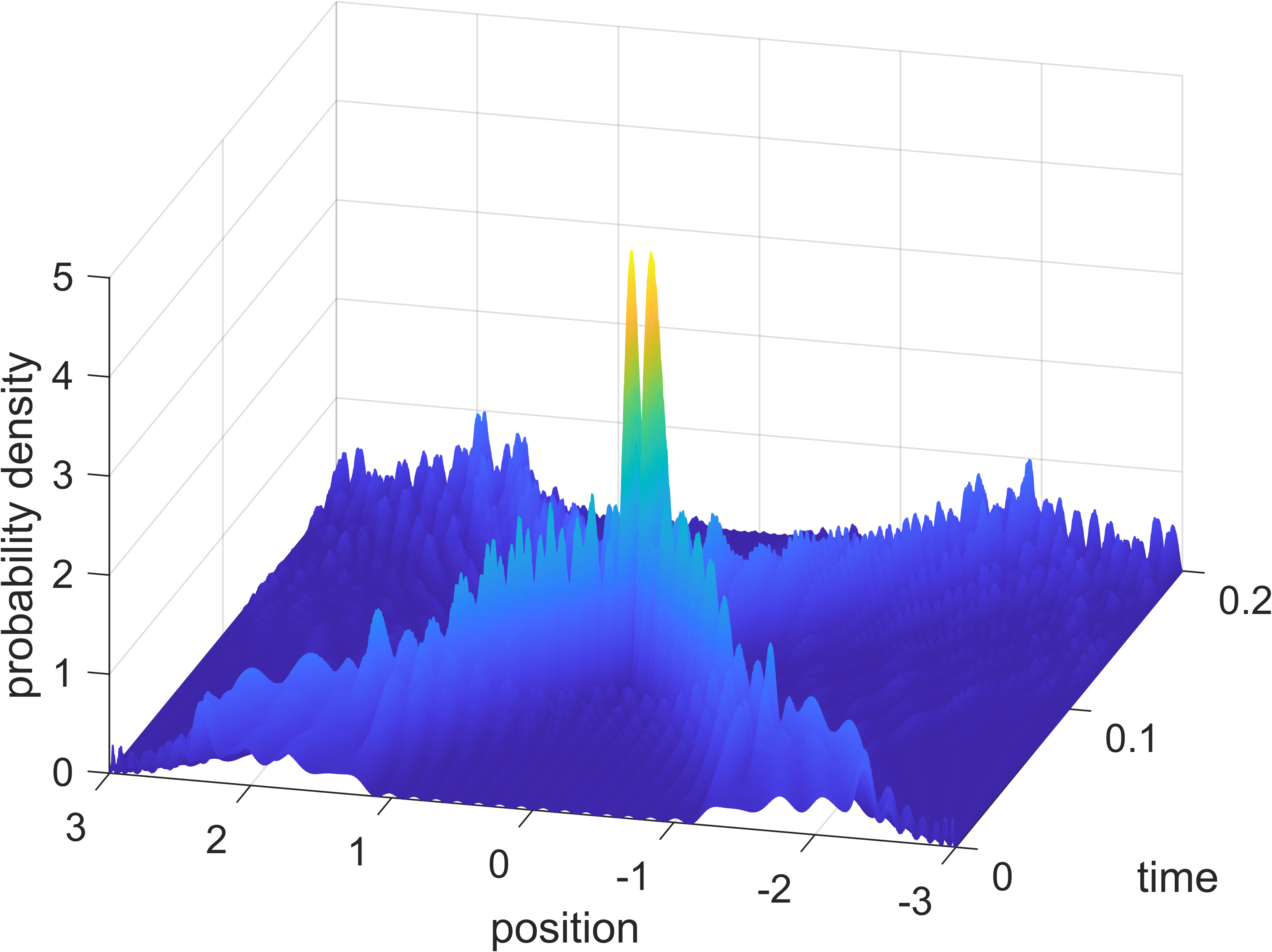}
\caption{Side view: Weyl-quantized nodal eigenfunction} \label{weyl:supra_data3_l3_tau2_side}
\end{subfigure}
\begin{subfigure}[b]{.32\linewidth}
\includegraphics[width=1\textwidth]{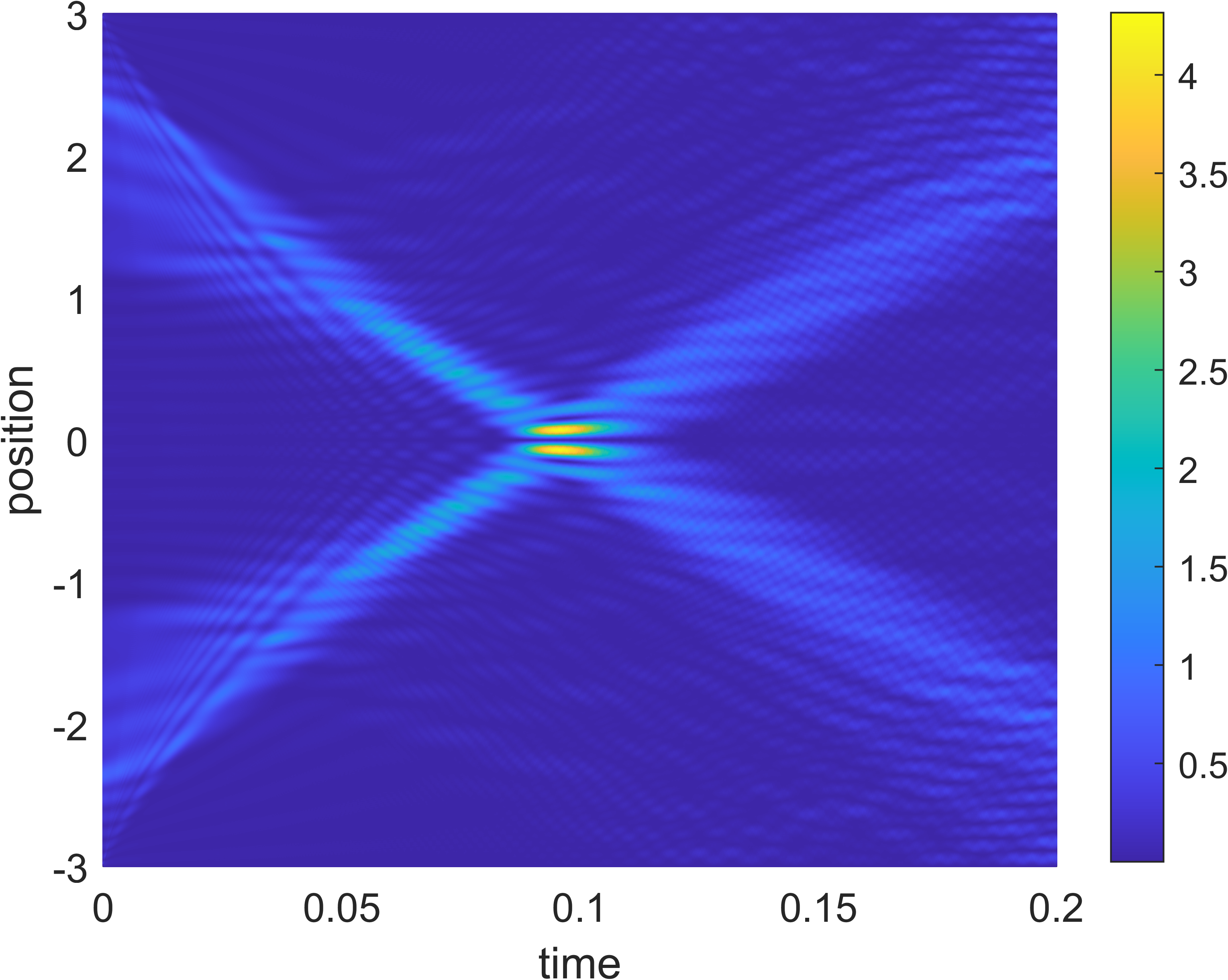}
\caption{Top view: Weyl-quantized nodal eigenfunction } \label{subfig:weyl_data3_l3_tau2_top}
\end{subfigure}
	
\caption{The time evolution of the nodal eigenfunctions of the (a)--(b) supraquantized and (c)--(d) Weyl-quantized TOA operator for the cosine potential with parameters $V_0=5, k=5$, confining length $l=3$ and eigenvalue $\tau=0.1$. The supraquantized TOA operator exhibits the most ideal unitary dynamics.}\label{fig:weylvssupr_data3_l3_tau2}
\end{figure}


\begin{figure}
\centering
\begin{subfigure}[b]{.32\linewidth}
\includegraphics[width=1\textwidth]{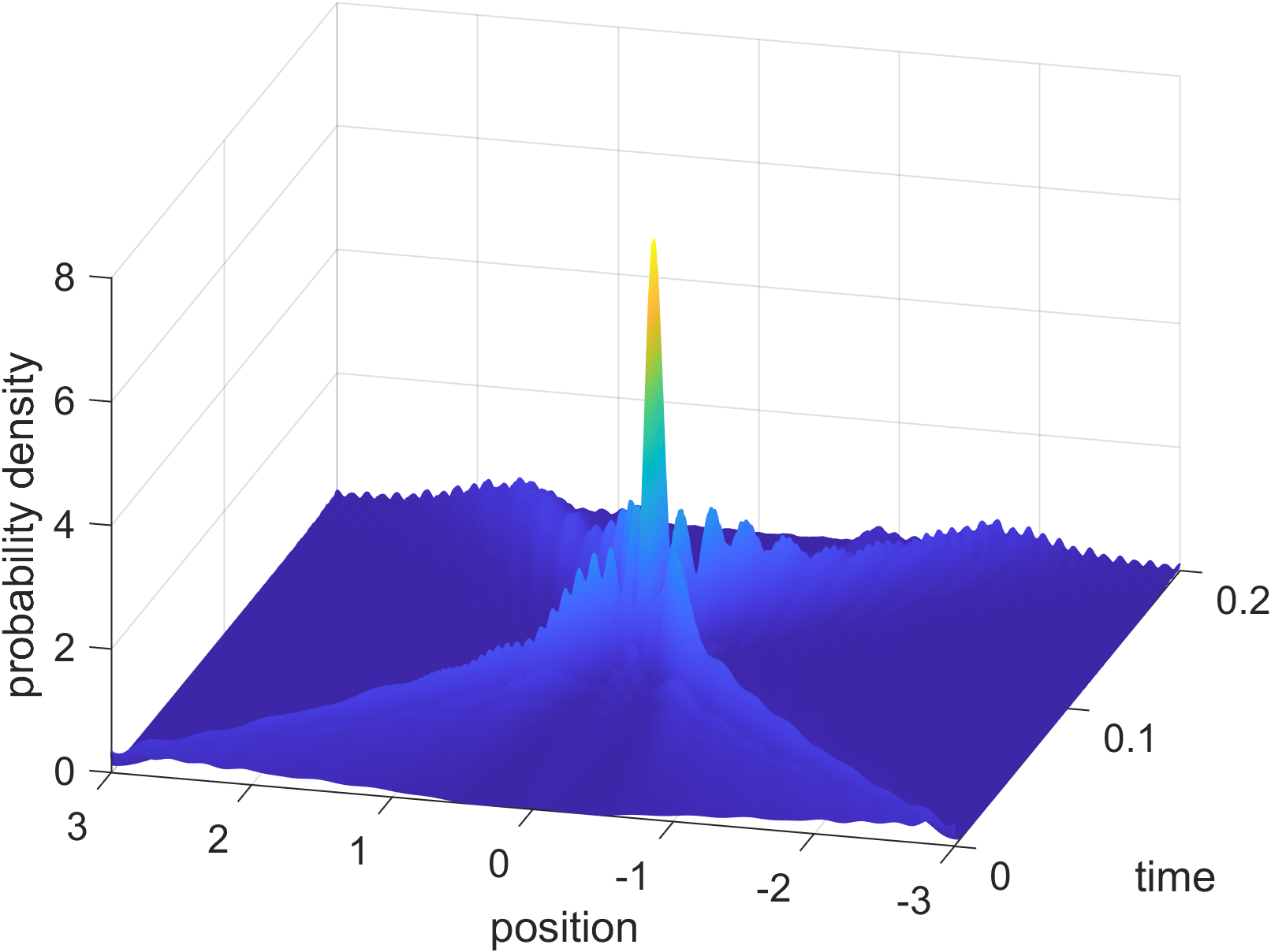}
\caption{Side view: Supraquantized non-nodal eigenfunction} \label{subfig:supra_data3_l3_tau2_side_nonn}
\end{subfigure}
\begin{subfigure}[b]{.32\linewidth}
\includegraphics[width=1\textwidth]{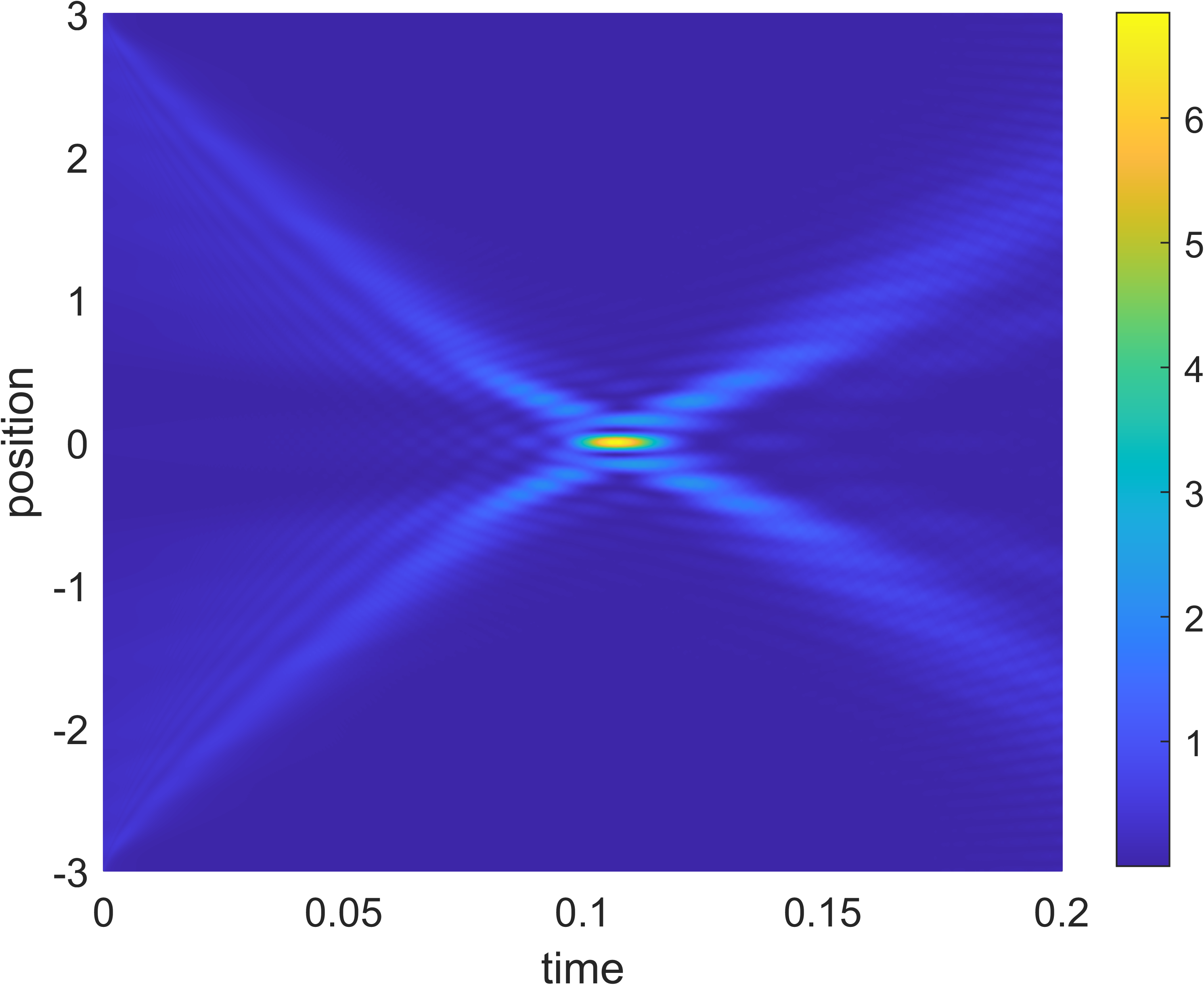}
\caption{Top view: Supraquantized non-nodal eigenfunction} \label{subfig:supra_data3_l3_tau2_top_nonn}
\end{subfigure}
	
\begin{subfigure}[b]{.32\linewidth}
\includegraphics[width=1\textwidth]{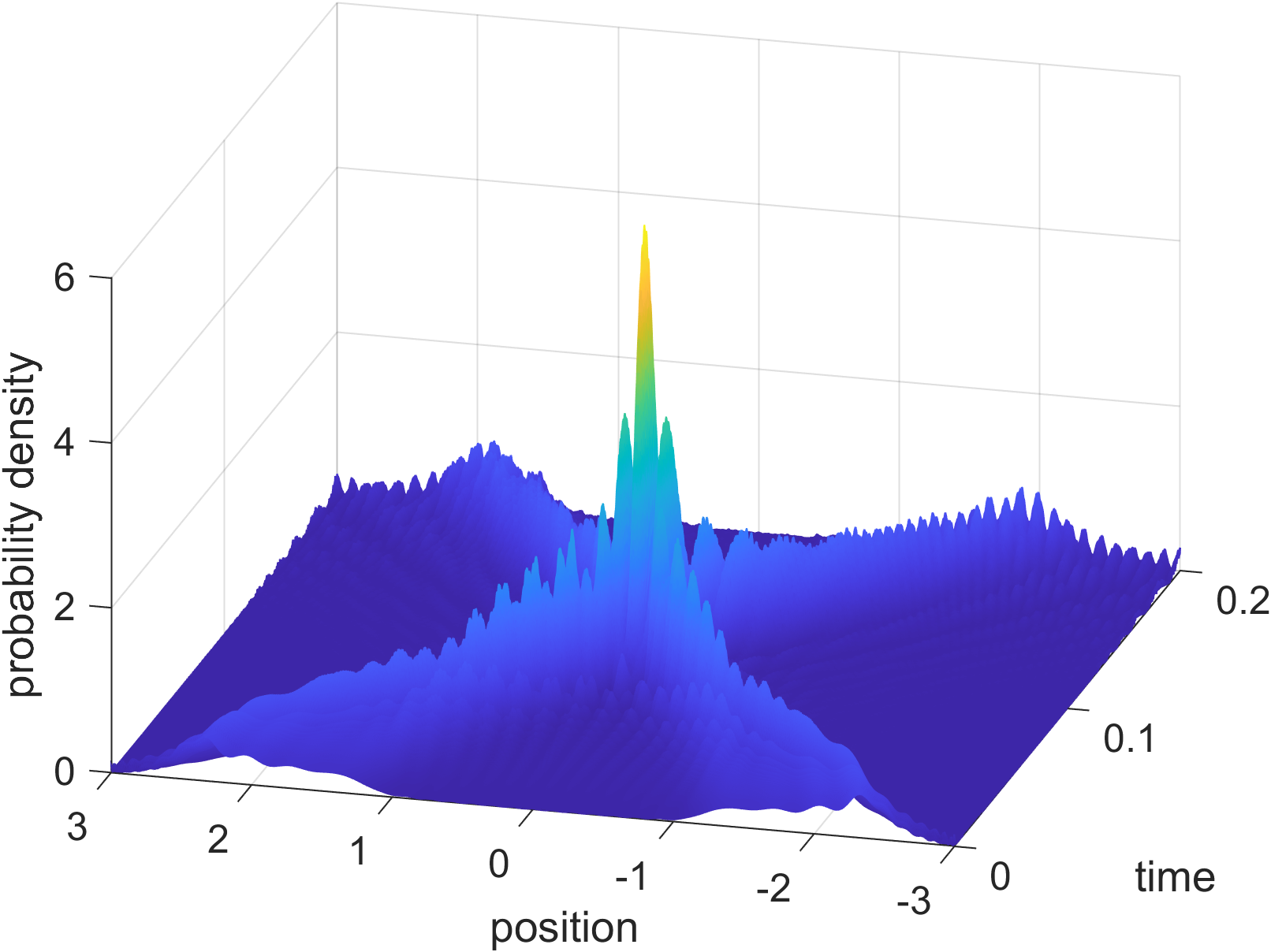}
\caption{Side view: Weyl-quantized non-nodal eigenfunction} \label{weyl:supra_data3_l3_tau2_side_nonn}
\end{subfigure}
\begin{subfigure}[b]{.32\linewidth}
\includegraphics[width=1\textwidth]{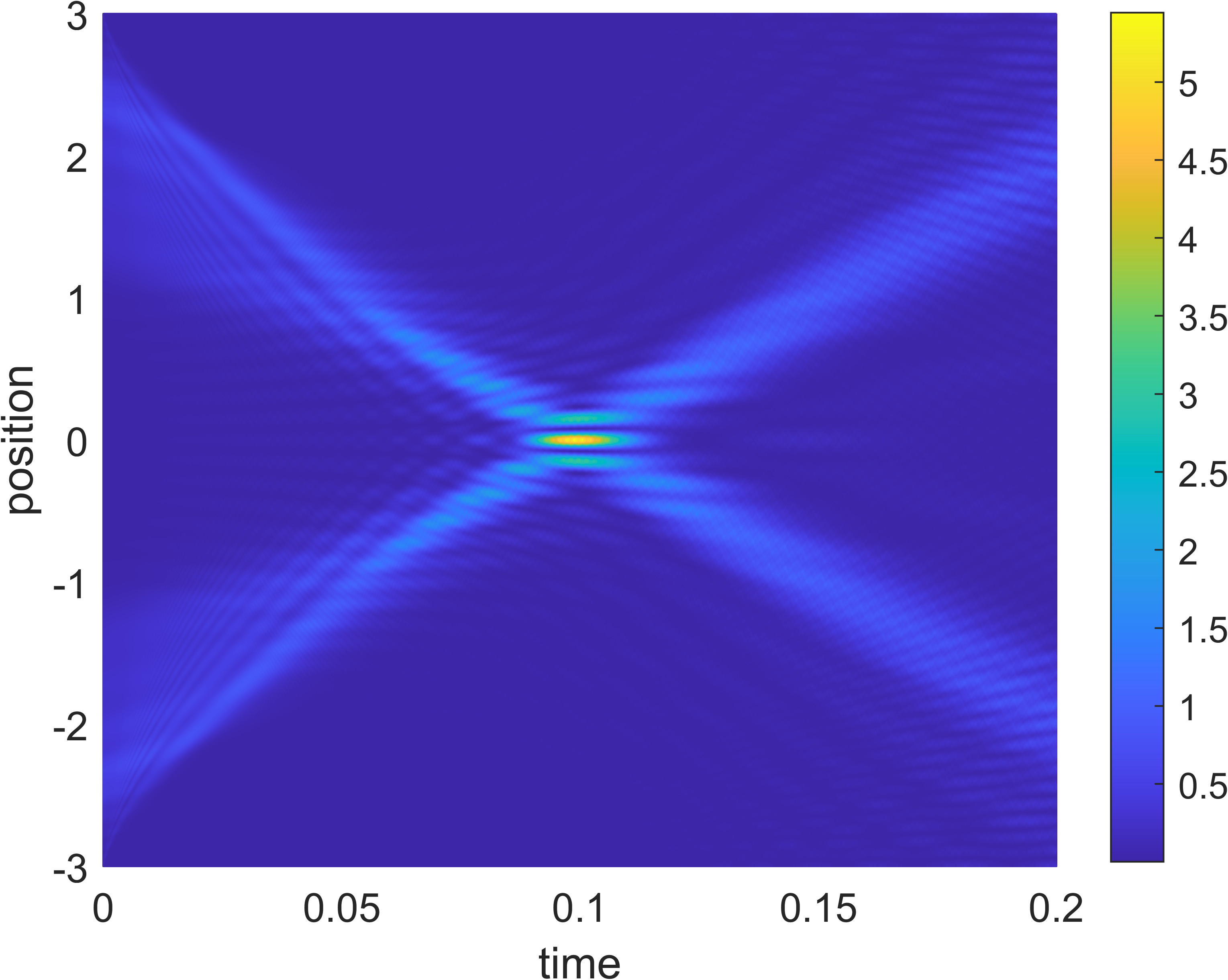}
\caption{Top view: Weyl-quantized non-nodal eigenfunction } \label{subfig:weyl_data3_l3_tau2_top_nonn}
\end{subfigure}
	
\caption{The time evolution of the non-nodal eigenfunctions of the (a)--(b) supraquantized and (c)--(d) Weyl-quantized TOA operator for the cosine potential with parameters $V_0=5, k=5$, confining length $l=3$ and eigenvalue $\tau=0.1$. The supraquantized TOA operator exhibits the most ideal unitary dynamics.}\label{fig:weylvssupr_data3_l3_tau2_noonn}
\end{figure}


\begin{figure}
\centering
\begin{subfigure}[b]{.32\linewidth}
\includegraphics[width=1\textwidth]{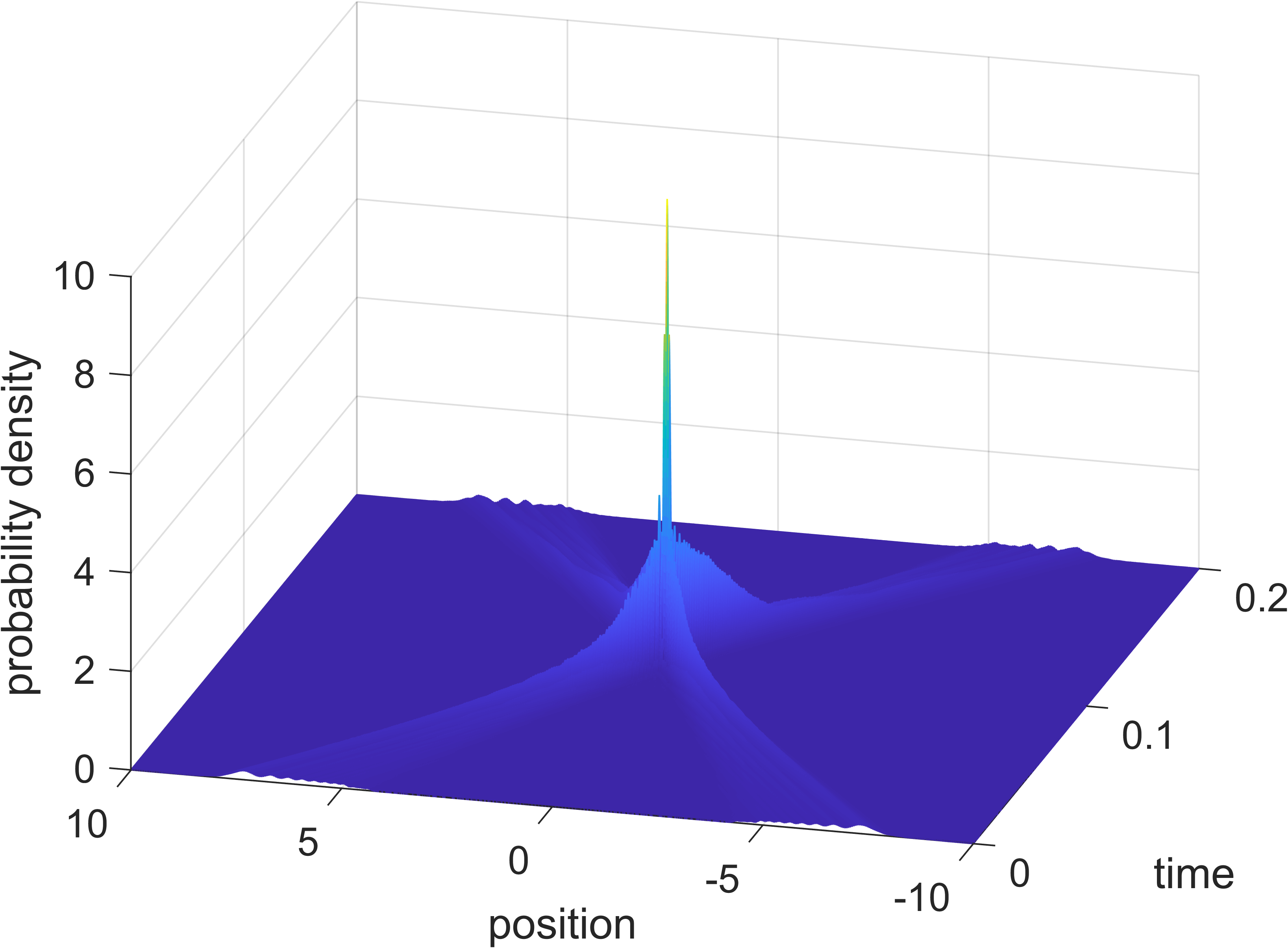}
\caption{Side view: Supraquantized non-nodal eigenfunction} \label{subfig:supra_data2_l10_tau1_side_nod}
\end{subfigure}
\begin{subfigure}[b]{.32\linewidth}
\includegraphics[width=1\textwidth]{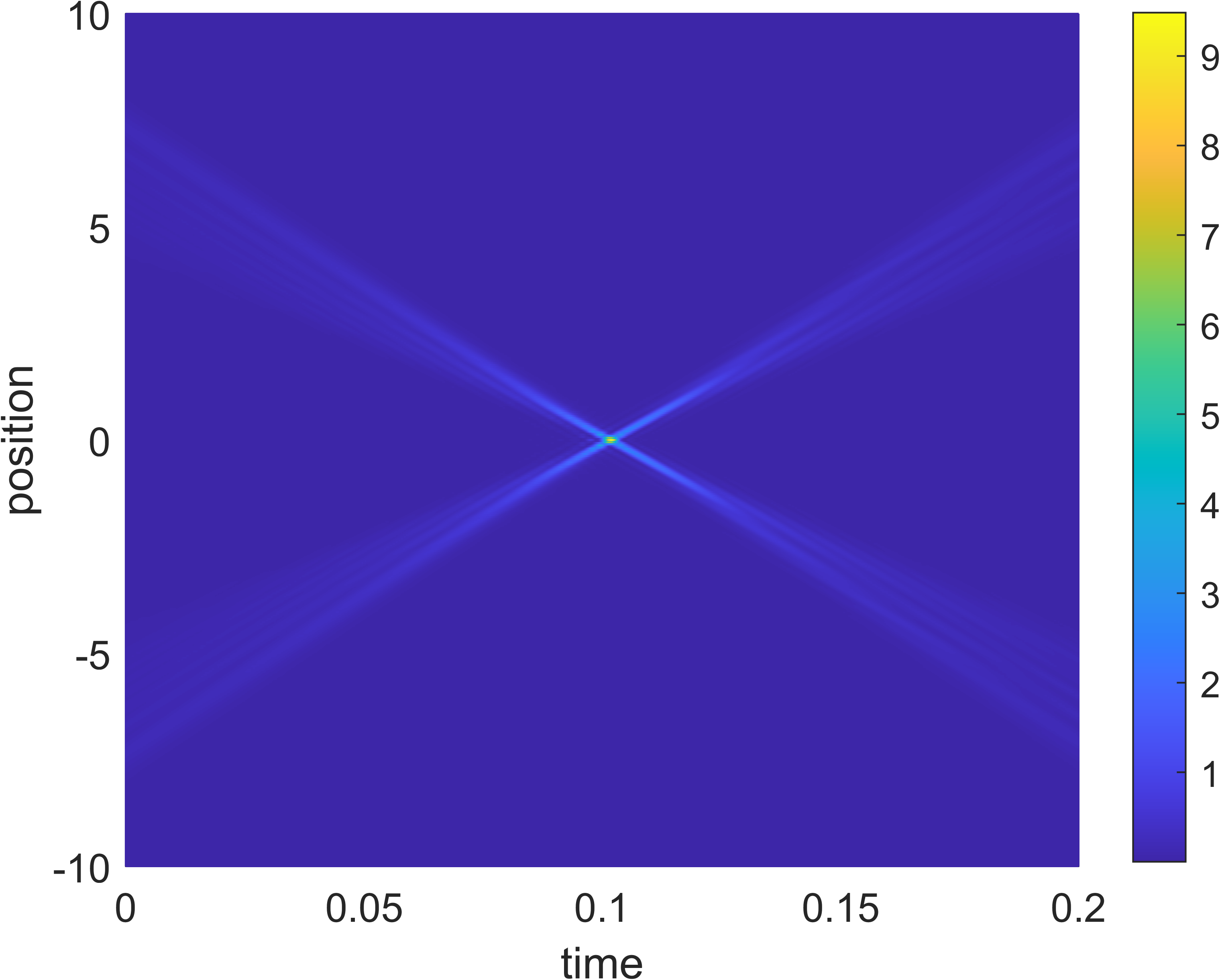}
\caption{Top view: Supraquantized non-nodal eigenfunction} \label{subfig:supra_data2_l10_tau1_top_nod}
\end{subfigure}
	
\begin{subfigure}[b]{.32\linewidth}
\includegraphics[width=1\textwidth]{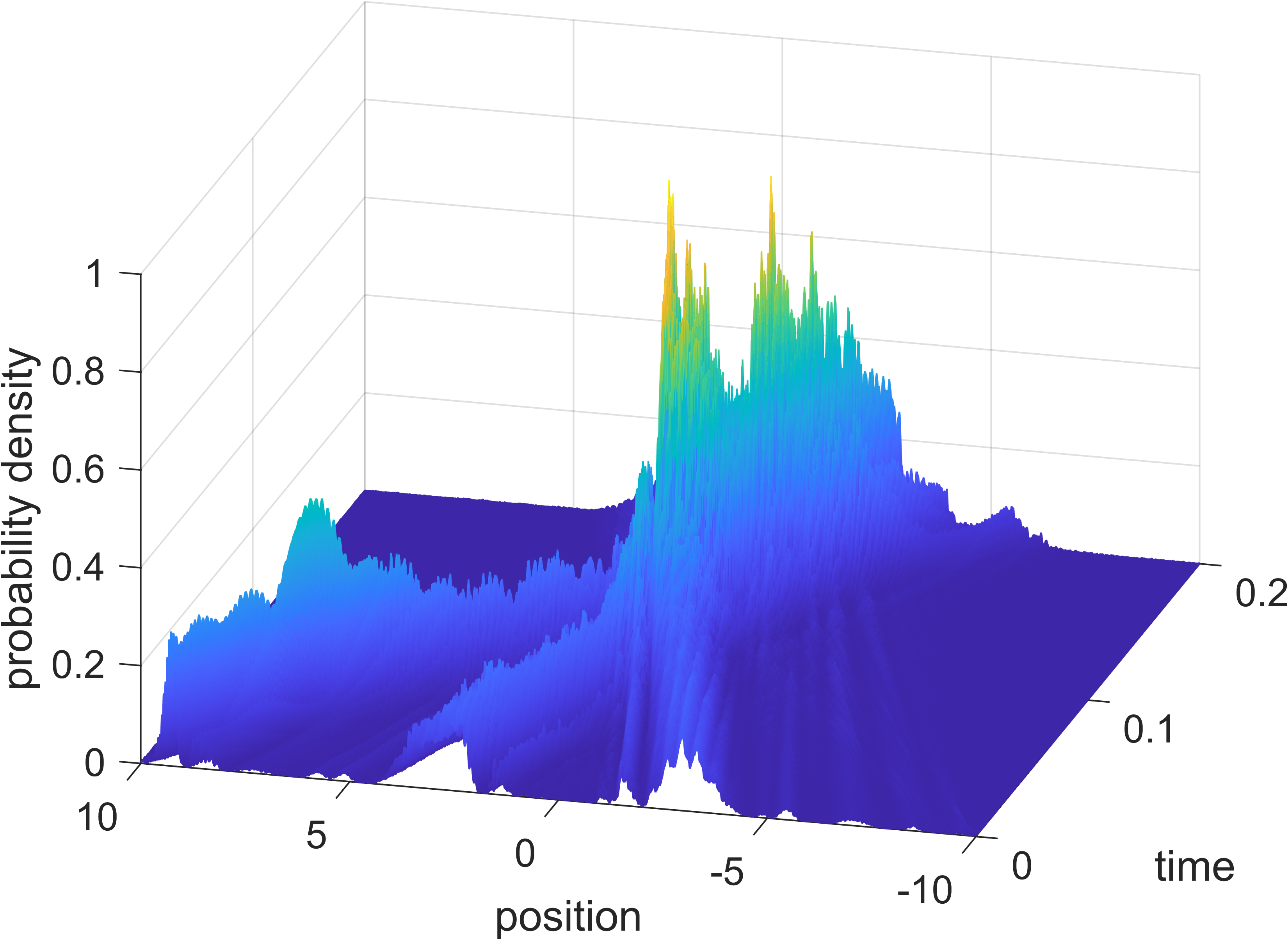}
\caption{Side view: Weyl-quantized non-nodal eigenfunction} \label{weyl:supra_data2_l10_tau1_side_nod}
\end{subfigure}
\begin{subfigure}[b]{.32\linewidth}
\includegraphics[width=1\textwidth]{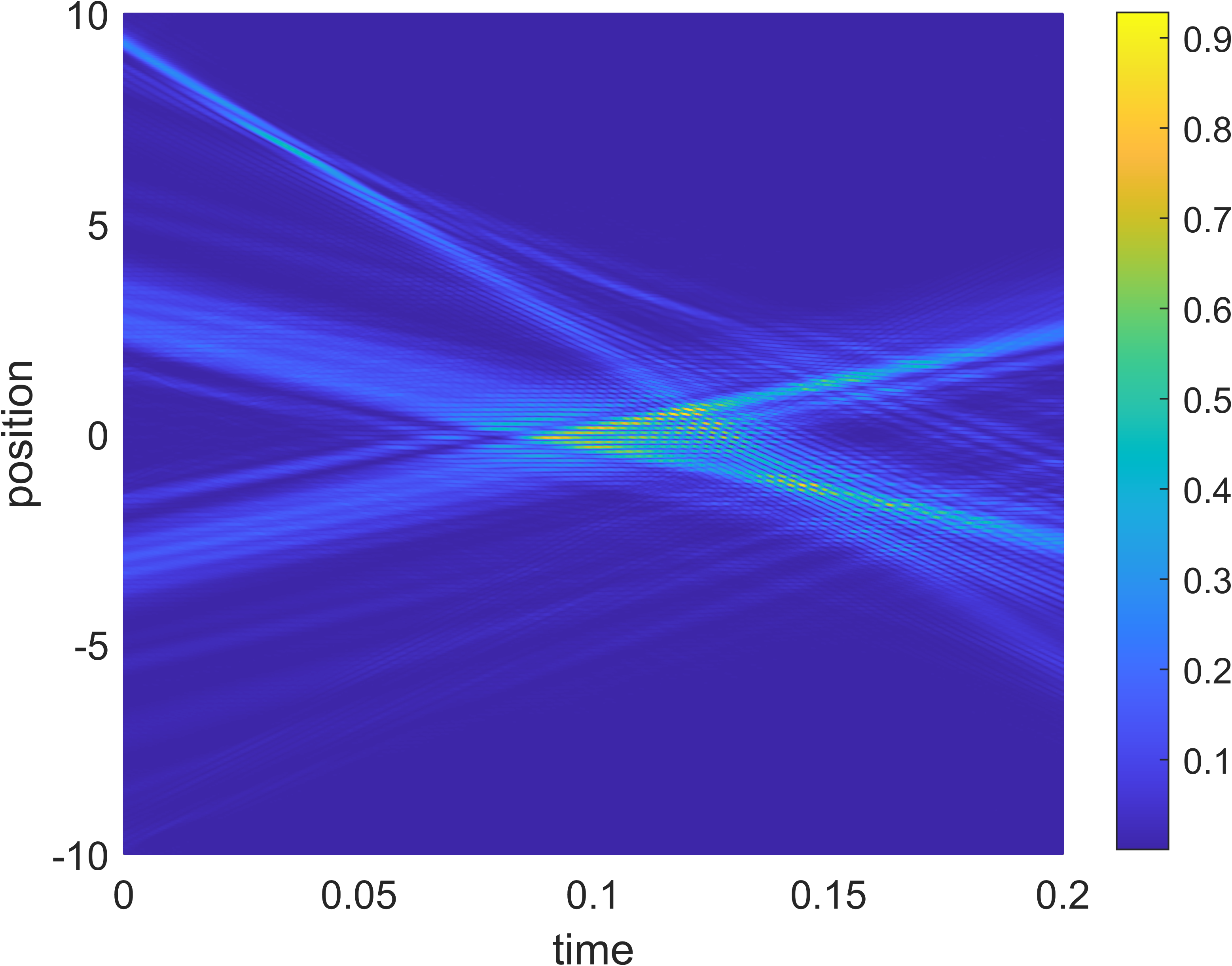}
\caption{Top view: Weyl-quantized non-nodal eigenfunction } \label{subfig:weyl_data2_l10_tau1_top_nod}
\end{subfigure}
	
\caption{The time evolution of the non-nodal eigenfunctions of the (a)--(b) supraquantized and (c)--(d) Weyl-quantized TOA operator for the cosine potential with parameters $V_0=5, k=1$, confining length $l=10$ and eigenvalue $\tau=0.1$. The supraquantized TOA operator exhibits the most ideal unitary dynamics.}\label{fig:weylvssupr_data2_10_tau1_nod}
\end{figure}


\begin{figure}
\centering
\begin{subfigure}[b]{.32\linewidth}
\includegraphics[width=1\textwidth]{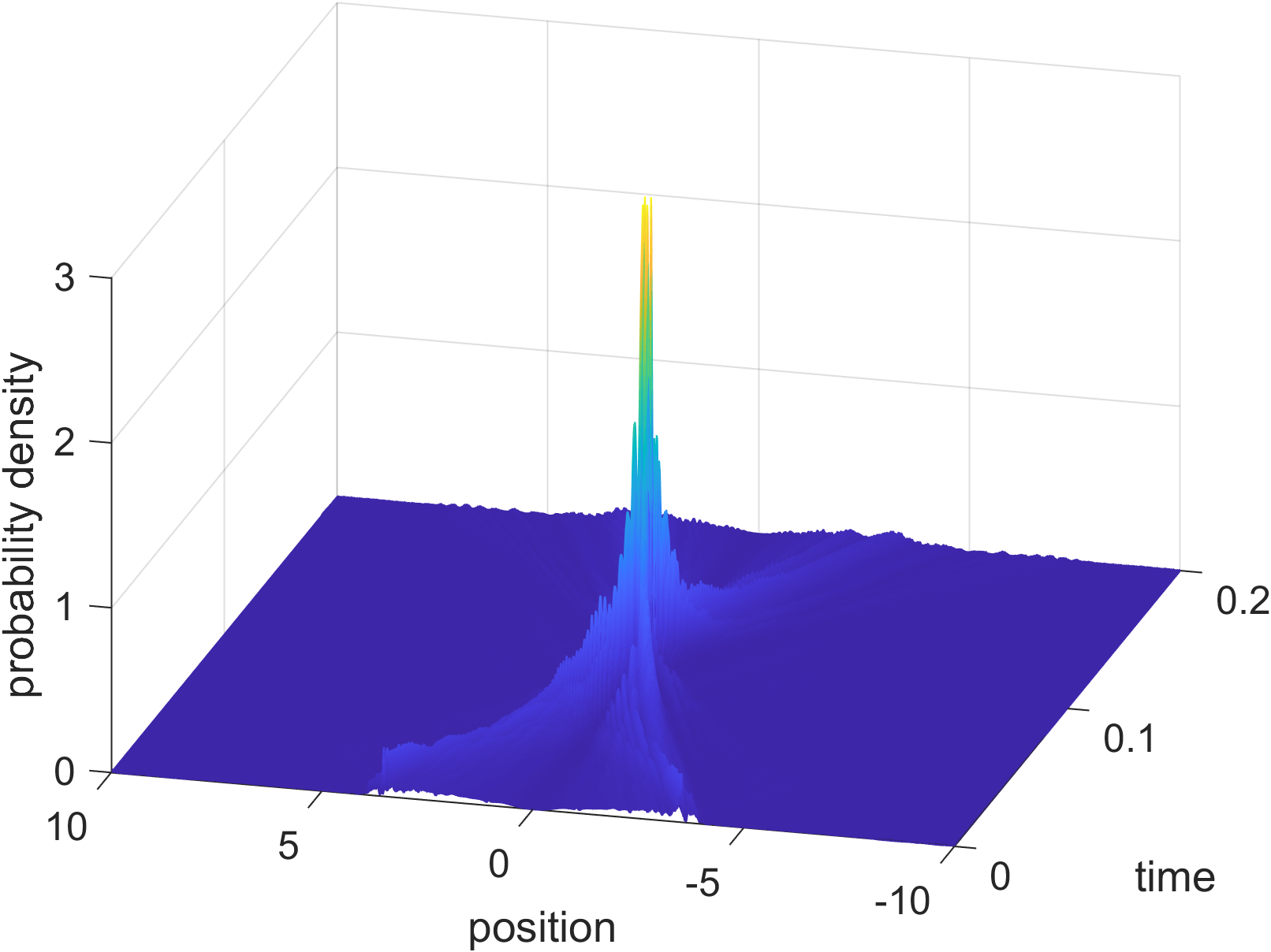}
\caption{Side view: Supraquantized nodal eigenfunction} \label{subfig:supra_data2_l10_tau1_side_nonn}
\end{subfigure}
\begin{subfigure}[b]{.32\linewidth}
\includegraphics[width=1\textwidth]{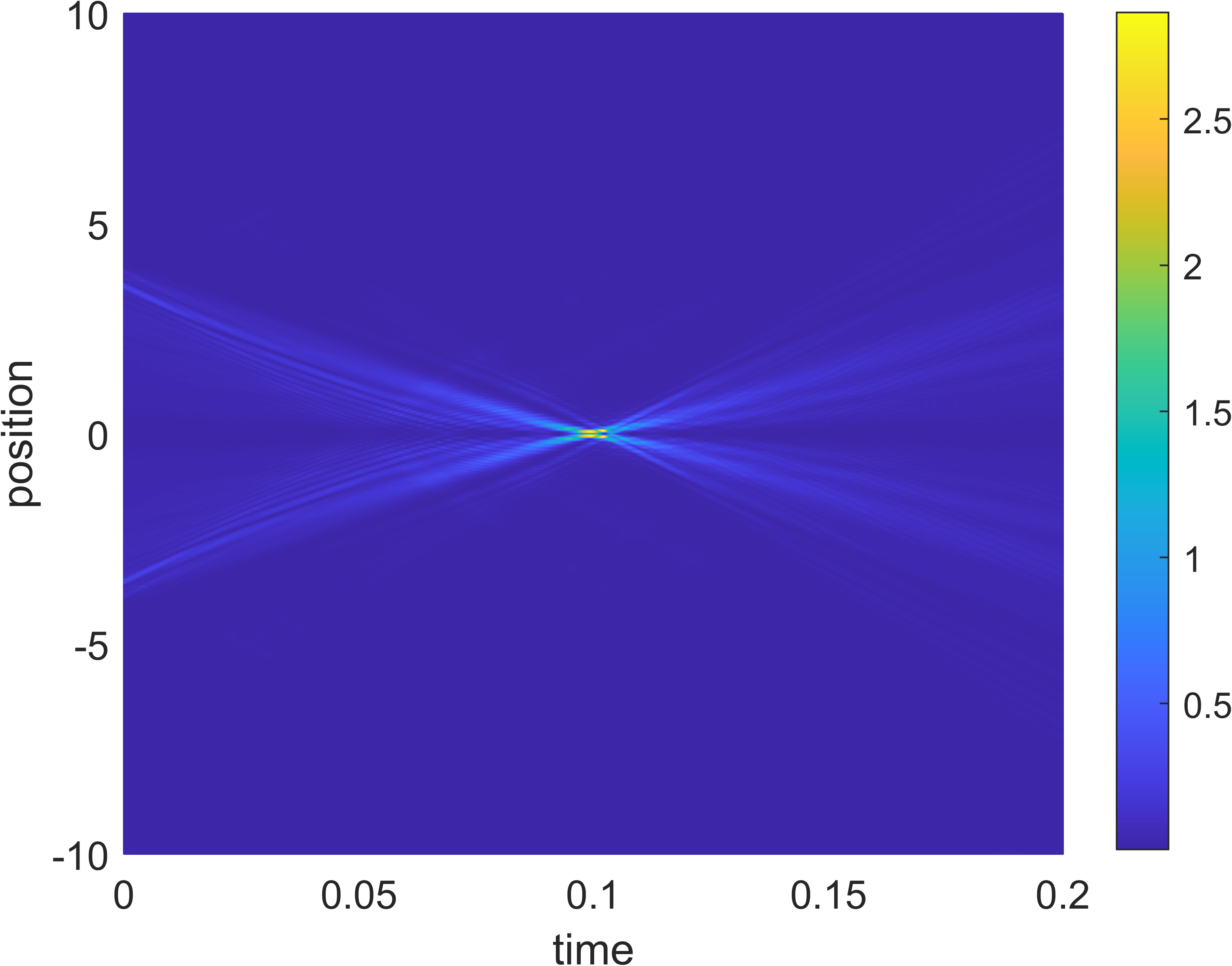}
\caption{Top view: Supraquantized nodal eigenfunction} \label{subfig:supra_data2_l10_tau1_top_nonn}
\end{subfigure}
	
\begin{subfigure}[b]{.32\linewidth}
\includegraphics[width=1\textwidth]{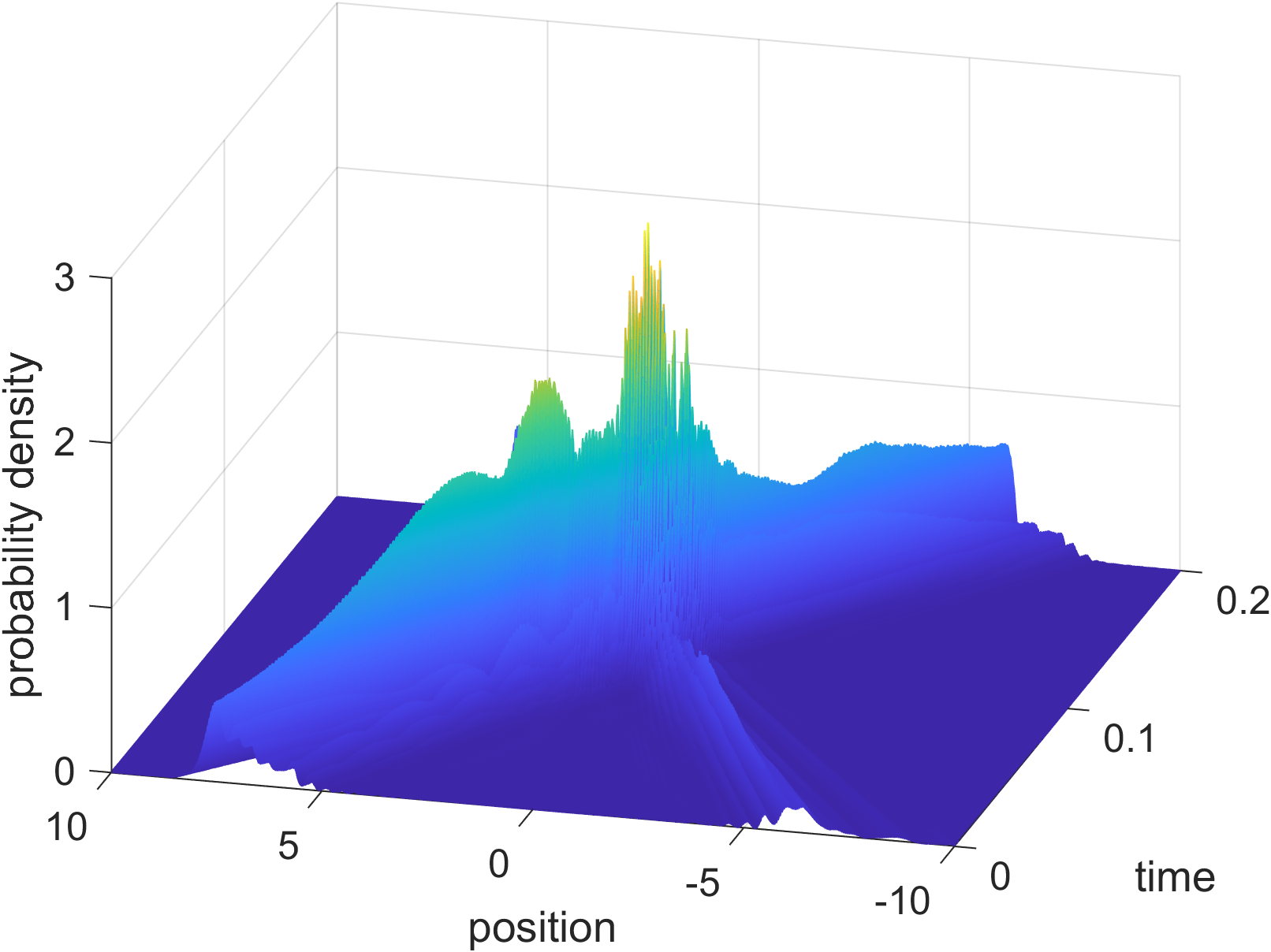}
\caption{Side view: Weyl-quantized nodal eigenfunction} \label{weyl:supra_data2_l10_tau1_side_nonn}
\end{subfigure}
\begin{subfigure}[b]{.32\linewidth}
\includegraphics[width=1\textwidth]{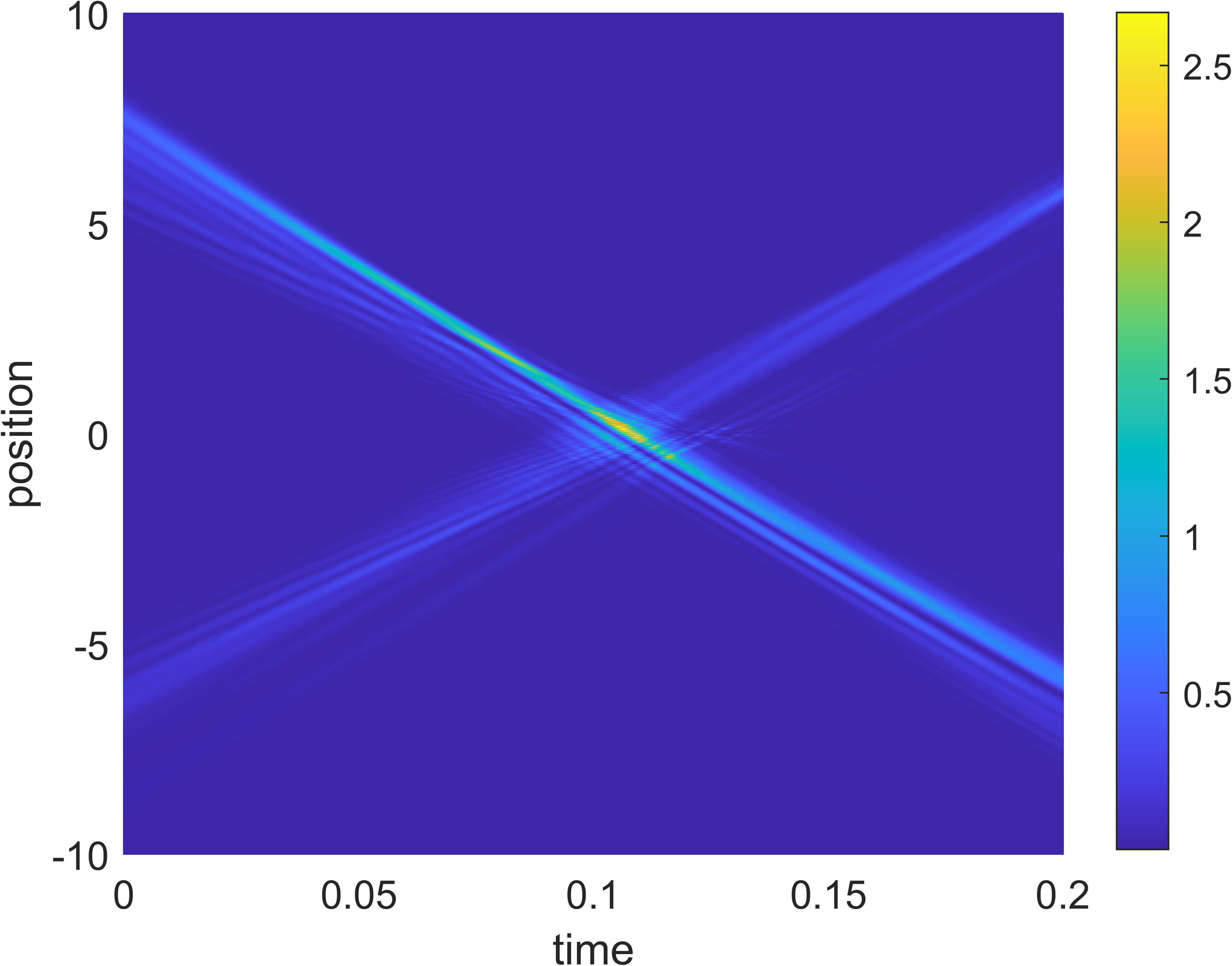}
\caption{Top view: Weyl-quantized nodal eigenfunction } \label{subfig:weyl_data2_l10_tau1_top_nonn}
\end{subfigure}
	
\caption{The time evolution of the \sout{non-}nodal eigenfunctions of the (a)--(b) supraquantized and (c)--(d) Weyl-quantized TOA operator for the cosine potential with parameters $V_0=5, k=1$, confining length $l=10$ and eigenvalue $\tau=0.1$. The supraquantized TOA operator exhibits the most ideal unitary dynamics.}\label{fig:weylvssupr_data2_10_tau1_noonn}
\end{figure}

\section{Conclusions} \label{sec:conclusion}

In this paper, we addressed the problem of constructing TOA operators canonically conjugate to the system Hamiltonian with separable potentials without relying on canonical quantization methods. Our construction entails finding an exact-closed form solution to a second-order partial differential equation, called the time kernel equation. This equation arises directly from the conjugacy requirement between our TOA operator and system Hamiltonian. 

Our exact solution enabled us to explore the properties of the conjugacy-preserving TOA operator. Specifically, we have shown that the constructed operator satisfies essential properties of a TOA observable, such as hermiticity, time-reversal symmetry, and correspondence between classical and quantum observables. We have also demonstrated it is a physically meaningful TOA observable by showing that its eigenfunctions exhibit the desired unitary arrival property at the arrival point equal to its respective eigenvalue. 

Furthermore, we compared the dynamics of the conjugacy-preserving (supraquantized) TOA and non-conjugacy-preserving (Weyl-quantized) operators with regards to the behavior of their nodal and non-nodal eigenfunctions. Nodal eigenfunctions correspond to particle appearance at the arrival point without detection, whereas non-nodal eigenfunctions represent particle appearance with detection. We have showed that the supraquantized operator consistently demonstrates more ideal nodal and non-nodal eigenfunctions within numerical accuracy. On the other hand, we demonstrated specific cases where the unitary dynamics of the Weyl-quantized operator lack the sharpness and precision exhibited by the supraquantized operator.  We can then confirm, once and for all, that the canonical commutation relation between time and energy influences the unitary dynamics of TOA operators.

\section{Acknowledgment}\label{sec:acknow}
D.A.L. Pablico and C.A.L. Arguelles are thankful for the scholarship support from the Department of Science and Technology - Science Education Institute (DOST-SEI).

\appendix
\section*{Appendix}\label{appendix1}
\setcounter{section}{1}
\subsection{Satisfying the time kernel equation}\label{app:finalchecktke}

As one final check, we now prove that Eq. (\ref{tsfull}) indeed satisfies the TKE. Taking the partial derivative of $T_S(u,v)$ with respect to $v$ gives
\begin{equation}
\begin{aligned}
\frac{\partial T_S(u,v)}{\partial v}&=\,G(v)\,\left(\frac{\mu}{2\hbar^2}\right)\int_{0}^{u} du'\,F(u')\frac{u'}{4}\\
&+ G(v)\,\left(\frac{\mu}{2\hbar^2}\right)^2 \int_{0}^{v} dv'\,G(v')\, \int_{0}^{u} du'\,F(u')\frac{u'}{4}\,\Tilde{F}(u,u')\,{}_0F_1 \left(;2;\left(\frac{\mu}{2\hbar^2}\right)\,\Tilde{G}(v,v')\,\Tilde{F}(u,u')\right).
\end{aligned}
\end{equation}
The above result is derived using the well-known Leibniz integral rule. Likewise, taking the partial derivative with respect to $u$ of the above equation leads to 
\begin{equation}\label{deri2tuv}
\begin{aligned}
&\frac{\partial^2 T_S(u,v)}{\partial u\,\partial v}=F(u)\,G(v)\,\frac{u}{4}\,\left(\frac{\mu}{2\hbar^2}\right)+F(u)\, G(v)\,\left(\frac{\mu}{2\hbar^2}\right)^2 \int_{0}^{v} dv'\,G(v')\, \int_{0}^{u} du'\,F(u')\frac{u'}{4}\\
&\times\left[{}_0F_1 \left(;2;\left(\frac{\mu}{2\hbar^2}\right)\,\Tilde{G}(v,v')\,\Tilde{F}(u,u')\right)+\,\frac{\mu}{2\hbar^2}\frac{\Tilde{G}(v,v')\,\Tilde{F}(u,u')}{2}{}_0F_1 \left(;3;\left(\frac{\mu}{2\hbar^2}\right)\,\Tilde{G}(v,v')\,\Tilde{F}(u,u')\right)\right].
\end{aligned}
\end{equation}

The factor in square brackets can be simplified using the well-known recurrence identity involving consecutive hypergeometric functions,
\begin{equation}\label{identityhypergeom}
	{}_0F_1 \left(;b;z\right) ={}_0F_1 \left(;b+1;z\right)+\frac{z}{b\,(b+1)}{}_0F_1 \left(;b+2;z\right). 
\end{equation}

Equation (\ref{deri2tuv}) further simplifies to 
\begin{equation}\label{deri2tuv2b}
\begin{aligned}
\frac{\partial^2 T_S(u,v)}{\partial u\,\partial v}=&F(u)\,G(v)\,\left(\frac{\mu}{2\hbar^2}\right)\\
&\times\left[\frac{u}{4}+\left(\frac{\mu}{2\hbar^2}\right)\int_{0}^{v} dv'\,G(v')\, \int_{0}^{u} du'\,F(u')\frac{u'}{4}  {}_0F_1 \left(;1;\left(\frac{\mu}{2\hbar^2}\right)\,\Tilde{G}(v,v')\,\Tilde{F}(u,u')\right) \right].
\end{aligned}
\end{equation}
Using our separability condition, $V(\left(u+v\right)/2)-V(\left(u-v\right)/2) = F(u)\,G(v)$ and noting that the factor in square brackets is exactly equal to the kernel factor $T_S(u,v)$ (\ref{tsfull}), we arrive at the partial differential equation
\begin{equation}
\frac{\partial^2 T_S(u,v)}{\partial u \partial v}=\frac{\mu}{2\hbar^2}\,\left(V\left(\frac{u+v}{2}\right)-V\left(\frac{u-v}{2}\right)\right)T_S(u,v),   
\end{equation}
which is exactly the TKE in its orginal form. This finally proves our exact-closed form solution of the TKE. 

\subsection{CPTOA operator under parity transformation}\label{subsec:cptoaparity}

In Ref. \cite{Galapon2018}, we have demonstrated that the quantized TOA operators possess eigenfunctions that have definite parities. Specifically, the eigenfunctions of an operator $\hat{\mathrm{T}}$ have definite parities if they themselves are also eigenfunctions of the parity operator $\hat{\Pi}$, whose action is defined by $\hat{\Pi}\, \psi(q)=\psi(-q)$. This happens when the operators $\hat{\mathrm{T}}$ and  $\hat{\Pi}$ commute. For a time operator of the form $(\hat{\mathrm{T}}\varphi)(q)=\int_{-\infty}^{\infty} \,dq' \,\langle q|\hat{\mathrm{T}}|q'\rangle\, \varphi(q')$, the desired equality $\hat{\mathrm{T}}\hat{\Pi}=\hat{\Pi}\hat{\mathrm{T}}$ is guaranteed if the time kernel $\langle q|\hat{\mathrm{T}}|q'\rangle$ satisfies the invariance under parity transformation in both of its arguments, that is, $\langle q|\hat{\mathrm{T}}|q'\rangle = \langle -q|\hat{\mathrm{T}}|-q'\rangle.$ Let us also check if the same property extends to the CPTOA operator. 

In the original $(q,q')$ coordinates, the kernel of the supraquantized TOA operator is
\begin{equation}\label{tsfull2qqp}
\begin{aligned}
\langle q|\hat{\mathrm{T}}_S|q'\rangle&=\frac{\mu}{i\hbar}\mathrm{sgn}(q-q')\,\Bigg[\frac{q+q'}{4}+\frac{\mu}{2\hbar^2}\int_{0}^{q-q'} dv'\,\, \int_{0}^{q+q'} du'\,\frac{u'}{4} \,\left[ V\left(\frac{u'+v'}{2}\right)-V\left(\frac{u'-v'}{2}\right)\right]\\
&\times{}_0F_1 \left(;1;\left(\frac{\mu}{2\hbar^2}\right)\,\int_{v'}^{q-q'}dv''\int_{u'}^{q+q'} du''\left[ V\left(\frac{u''+v''}{2}\right)-V\left(\frac{u''-v''}{2}\right)\right]\right)\Bigg].
\end{aligned}
\end{equation}

Taking the transformation $(q,q') \to (-q,-q')$ and performing the following change of variables: $(v' \to -v')$, $(u' \to -u')$, $(v'' \to -v'')$, and $(u'' \to -u'')$, we arrive at the following transformed kernel
\begin{equation}\label{tsfull2qqp2}
\begin{aligned}
\langle -q|\hat{\mathrm{T}}_S|-q'\rangle&=\frac{\mu}{i\hbar}\mathrm{sgn}(q-q')\Bigg[\frac{q+q'}{4}+\frac{\mu}{2\hbar^2}\int_{0}^{q-q'} dv'\int_{0}^{q+q'} du'\frac{u'}{4} \left[ V\left(-\frac{u'+v'}{2}\right)-V\left(-\frac{u'-v'}{2}\right)\right]\\
&\times{}_0F_1 \left(;1;\left(\frac{\mu}{2\hbar^2}\right)\,\int_{v'}^{q-q'}dv''\int_{u'}^{q+q'} du''\left[ V\left(-\frac{u''+v''}{2}\right)-V\left(-\frac{u''-v''}{2}\right)\right]\right)\Bigg].
	\end{aligned}
\end{equation}

Clearly, the desired invariance $\langle q|\hat{\mathrm{T}}_S|q'\rangle = \langle -q|\hat{\mathrm{T}}_S|-q'\rangle$ is preserved when the potential is even, $V(q)=V(-q)$. Therefore, the supraquantized TOA operator possesses odd and even eigenfunctions for an even interaction potential. 

For non-even potentials, the kernel is no longer invariant under parity transformation so that the CPTOA operator no longer commutes with the parity operator. Nevertheless, notice that the right-hand side of Eq. (\ref{tsfull2qqp2}) can also be interpreted as the time kernel factor of the supraquantized TOA operator for a given interaction potential $\hat{\Pi}V(q)=V(-q)$. Following Ref. \cite{Galapon2018}, we can define the specific conjugacy-preserving TOA operators $\hat{\mathrm{T}}^\pm_S$ where the positive sign corresponds to the potential $V=V(q)$ while the negative sign is for $V=V(-q)$. It can be shown that the eigenfunctions of the two operators are related to each other. Let us define $\tau$ and $\varphi^+(q)$ be the eigenvalue and eigenfunction, respectively, of the operator $\hat{\mathrm{T}}^+_S$. From the eigenvalue equation, $\hat{\mathrm{T}}^+_S\,\varphi^+(q)=\tau\,\varphi^+(q)$, one finds the relations
\begin{equation}
\tau \hat{\Pi}\varphi^+(q)=\int_{-\infty}^{\infty}dq'\langle -q|\hat{\mathrm{T}}^+_S|-q'\rangle \, \hat{\Pi}\, \varphi^+(q')=\int_{-\infty}^{\infty}dq'\langle q|\hat{\mathrm{T}}^-_S|q'\rangle \, \hat{\Pi}\, \varphi^+(q').
\end{equation}
The equality suggests that the eigenfunction $\varphi^-(q)=\hat{\Pi}\, \varphi^+(q)$ is an eigenfunction of the operator $\hat{\mathrm{T}}^-_S$ with the same eigenvalue $\tau$. The converse is also true. Defining $\varphi^-(q)$ to be an eigenfunction of  $\hat{\mathrm{T}}^-_S$ with the eigenvalue $\tau$, then $\varphi^+(q)=\hat{\Pi}\, \varphi^-(q)$ is an eigenfunction of $\hat{\mathrm{T}}^+_S$ for a given $\tau$.

\end{document}